\theoremstyle{definition}
\newtheorem{definition}{Definition}[section]
\newtheorem{lemma}{Lemma}[section]
\newtheorem{remark}{Remark}[section]
\newcommand{\beq}{\begin{equation}}
\newcommand{\eeq}{\end{equation}}
\newcommand{\bqa} {\begin{eqnarray}}
\newcommand{\eqa} {\end{eqnarray}}
\newcommand{\ups}{\upsilon}
\newcommand{\eps}{\varepsilon}
\def \l {\left(}
\def \r {\right)}
\def \lal {\langle}
\def \ral {\rangle}
\DeclareMathOperator{\Ad}{Ad}
\DeclareMathOperator{\End}{End}
\DeclareMathOperator{\Id}{Id}
\DeclareMathOperator{\sign}{sign}
\DeclareMathOperator{\Tr}{Tr}
\newcommand{\NN}{{\mathbb N}}
\newcommand{\ZZ}{{\mathbb Z}}
\newcommand{\RR}{{\mathbb R}}
\newcommand{\CCC}{{\mathbb C}}
\newcommand{\ind}{{\theta}}
\newcommand{\fparity}{{\eta}}
\newcommand{\fparityU}{{\Theta}}
\newcommand{\sym}{{\gamma}}
\newcommand{\perm}{{\rho}}
\newcommand{\permU}{{P}}
\newcommand{\stringop}{{\kappa}}
\newcommand{\hallcond}{{\sigma}}
\newcommand{\vortex}{{v}}
\newcommand{\hilb}[1]{{\mathcal #1}}
\newcommand{\cstar}[1]{{\mathfrak{#1}}}
\title{An index for invertible phases of two-dimensional quantum spin systems}
\author{Nikita Sopenko$\footnote{School of Natural Sciences, Institute for Advanced Study, 1 Einstein Drive, Princeton, NJ 08540 USA}$}
\date{\today}
\begin{document}
\maketitle

\begin{abstract}
We define an index for invertible phases of two-dimensional fermionic and bosonic quantum spin systems without any additional symmetry. Conjecturally, it provides a microscopic definition of an invariant related to the chiral central charge of the boundary modes $c_- \bmod 24$ when the effective conformal field theory description is valid. Using this index, we prove that free fermionic systems with Chern number $\nu \bmod 48 \neq 0$ are in a non-trivial invertible phase.
\end{abstract}

\section{Introduction}

The classification of phases of non-interacting fermionic systems at zero temperature is well understood by now \cite{kitaev2009periodic, schnyder2008classification}. An important open problem is to determine how this classification changes when we allow arbitrarily strong interaction. Although this problem appears generally intractable,  one can at least hope to solve it for phases with short-range entanglement, also known as invertible phases. By definition, a system is in an invertible phase if it can be stacked with another system, such that the ground state of the combined system can be adiabatically connected to an unentangled state while preserving locality. Various topological invariants of such phases have been constructed under deformations preserving some additional symmetry (see e.g. \cite{ogata2019classification, ogatabourne, ogata2021h, sopenko2021, LocalNoether, bachmann2024many}). For example, the Hall conductance provides an invariant with respect to deformations preserving internal $U(1)$-symmetry. But little is known when no symmetry is present, except for the one-dimensional case.

In two dimensions, under favorable circumstances, the edge modes on the boundary of a quantum system can be effectively described by a conformal field theory (CFT) whose structural properties heavily rely on conformal symmetry. One important piece of data is the chiral central charge $c_-$ of the Virasoro algebra that characterizes the projective action of the diffeomorphism group in the Hilbert space of the theory on a circle. It was suggested in \cite{kitaev2006anyons} that $c_-$ is a robust characteristic of the ground state of the system that can be defined even when conformal symmetry does not emerge at the boundary. An attempt was made in \cite[Appendix D]{kitaev2006anyons} to give a microscopic definition using assumptions about the thermal properties of gapped systems. However, these assumptions are hard to justify from the first principles, and a mathematical definition of the invariant of phases associated with $c_-$ is still lacking.

The goal of this note is to define an index for invertible phases of two-dimensional fermionic and bosonic quantum spin systems that provides a microscopic definition of an invariant presumably related to the chiral central charge $c_- \bmod 24$. As an application, we compute this index for systems of free fermions which are characterized by their Chern number $\nu$, and show that it distinguishes phases with different values of $\nu \bmod 48$. It allows to prove that free fermionic systems with $\nu \bmod 48 \neq 0$ can not be adiabatically connected to a trivial state even by adding arbitrarily strong interaction.

\subsection{Main idea}

\begin{figure}
\centering
\begin{tikzpicture}[scale=.5]

\filldraw[orange, ultra thick, fill = orange!5] (-5,-2) -- (-5+4,0+4-2) -- (5+4,0+4-2) -- (5,0-2);
\filldraw[orange, ultra thick, fill = orange!5] (-5,-1) -- (-5+4,0+4-1) -- (5+4,0+4-1) -- (5,0-1);
\filldraw[orange, ultra thick, fill = orange!5] (-5,0) -- (-5+4,0+4) -- (5+4,0+4) -- (5,0);

\draw[orange, ultra thick] (-5,-1) .. controls (-1,-1) .. (0,-1/2) .. controls (1,0) .. (5,0);
\filldraw[white, ultra thick] (-5,-2) .. controls (-1,-2) .. (0,-3/2) -- (0,-2);
\filldraw[white, ultra thick] (0,-2) -- (0,-1) .. controls (1,-2) .. (5,-2);
\filldraw[orange!5, ultra thick] (0.4, -1.2) -- (0,-1) -- (0,-3/2);
\draw[orange, ultra thick] (-5,-2) .. controls (-1,-2) .. (0,-3/2) .. controls (1,-1) .. (5,-1);
\draw[orange, ultra thick] (-5,0) .. controls (-1,0) .. (0,-1) .. controls (1,-2) .. (5,-2);

\filldraw [orange] (2,-1+2) circle (5pt);
\draw[rotate around={-45:(0,-1)}, orange, dashed] (1/2,-1) arc(0:180: 14pt and 85pt);

\end{tikzpicture}
\caption{A twist defect for $N=3$.}
\label{fig:twistdefect}
\end{figure}
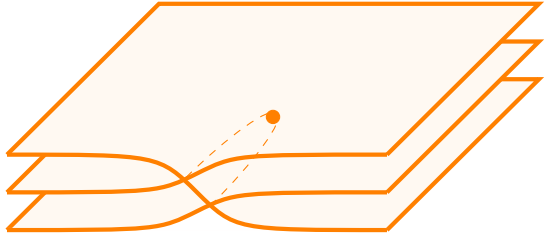

Before defining the index mathematically, let us explain the main idea behind the definition. Suppose $\Psi$ is a ground state of a two-dimensional quantum spin system with rapidly decaying correlations. We may consider a stack $\Psi^{\otimes N}$ of $N$ identical copies. It has a natural action of the symmetric group $S_N$ that permutes the systems. For a cut along a half-line, we may consider a point-like defect that corresponds to a ``flux" for the cyclic permutation $\l 1, 2, ... , N \r \to \l N, 1, 2, ... , N-1 \r$ so that the copies are permuted accordingly when one crosses the cut. Such fluxes can be interpreted as conical defects with the excess angle $2 \pi N$ or {\it twist defects} (see Fig. \ref{fig:twistdefect}). They can be made invariant under $\ZZ/N$-symmetry generated by the cyclic permutation. Under certain assumptions on the state $\Psi$, these defects are ``mobile", meaning they can be moved by applying local Hamiltonian evolution supported near a given path. If multiple defects are moved in such a way that the initial and final configurations are the same, a non-trivial phase factor may arise. We can extract a universal phase factor (topological spin) by considering the exchange process in which each path is passed in both directions (see Fig. \ref{fig:exchange}), similar to how the anyons statistics is determined for a topologically ordered system \cite{kitaev2006anyons}. This phase factor is independent of a specific Hamiltonian used to implement the movements and remains unchanged if the trajectories are deformed. We claim that it provides an invariant of the topological phase of a single state $\Psi$. 

We will show that for invertible states, the condition of mobility of twist defects is satisfied (though we expect it is satisfied, more generally, for ground states of topologically ordered systems with anyons). That allows us to define an invariant of two-dimensional invertible phases. For bosonic systems, we show that this invariant is related to the symmetry-protected index for the permutation symmetry $S_N$ of $N$ copies that takes values in the group cohomology $H^3(S_N,\RR/\ZZ)$. This group stabilizes $H^3(S_{N \geq 6}, \RR/\ZZ) = \ZZ/12 \oplus \ZZ/2 \oplus \ZZ/2$ for $N \geq 6$, providing quantization of the invariant.

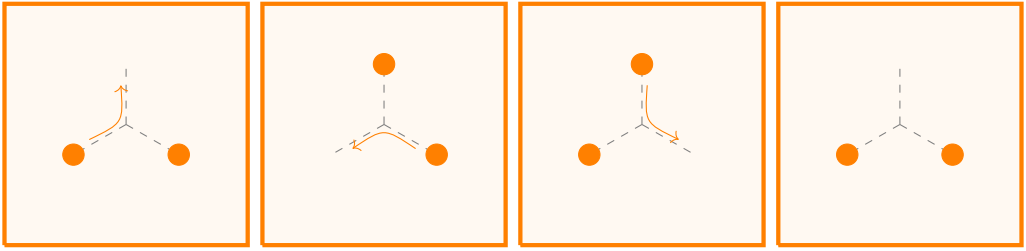
\begin{figure}
\centering
\begin{tikzpicture}[scale=.8]

\filldraw[orange, ultra thick, fill = orange!5] (-2,-2) -- (-2,2) -- (2,2) -- (2,-2) -- (-2,-2);

\draw[color=gray, dashed] (0,0) -- ({cos(0*120+90)},{sin(0*120+90)});
\draw[color=gray, dashed] (0,0) -- ({cos(1*120+90)},{sin(1*120+90)});
\draw[color=gray, dashed] (0,0) -- ({cos(2*120+90)},{sin(2*120+90)});

\filldraw [orange] ({cos(1*120+90)},{sin(1*120+90)}) circle (5pt);
\filldraw [orange] ({cos(2*120+90)},{sin(2*120+90)}) circle (5pt);

\draw[orange, <-, rotate around={-120:(0,0)}] ( {0.6*cos(1*120+90)},{0.6*sin(1*120+90)-0.1}) .. controls  (0,0-.05) .. ( {0.6*cos(2*120+90)},{0.6*sin(2*120+90)-0.1});

\end{tikzpicture}
%    \qquad % <----------------- SPACE BETWEEN PICTURES    
\begin{tikzpicture}[scale=.8]

\filldraw[orange, ultra thick, fill = orange!5] (-2,-2) -- (-2,2) -- (2,2) -- (2,-2) -- (-2,-2);

\draw[color=gray, dashed] (0,0) -- ({cos(0*120+90)},{sin(0*120+90)});
\draw[color=gray, dashed] (0,0) -- ({cos(1*120+90)},{sin(1*120+90)});
\draw[color=gray, dashed] (0,0) -- ({cos(2*120+90)},{sin(2*120+90)});

\filldraw [orange] ({cos(0*120+90)},{sin(0*120+90)}) circle (5pt);
% \filldraw [orange] ({cos(1*120+90)},{sin(1*120+90)}) circle (5pt);
\filldraw [orange] ({cos(2*120+90)},{sin(2*120+90)}) circle (5pt);

\draw[orange, <-, rotate around={0:(0,0)}] ( {0.6*cos(1*120+90)},{0.6*sin(1*120+90)-0.1}) .. controls  (0,0-.05) .. ( {0.6*cos(2*120+90)},{0.6*sin(2*120+90)-0.1});

\end{tikzpicture}
%    \qquad % <----------------- SPACE BETWEEN PICTURES    
\begin{tikzpicture}[scale=.8]

\filldraw[orange, ultra thick, fill = orange!5] (-2,-2) -- (-2,2) -- (2,2) -- (2,-2) -- (-2,-2);

\draw[color=gray, dashed] (0,0) -- ({cos(0*120+90)},{sin(0*120+90)});
\draw[color=gray, dashed] (0,0) -- ({cos(1*120+90)},{sin(1*120+90)});
\draw[color=gray, dashed] (0,0) -- ({cos(2*120+90)},{sin(2*120+90)});

\filldraw [orange] ({cos(0*120+90)},{sin(0*120+90)}) circle (5pt);
\filldraw [orange] ({cos(1*120+90)},{sin(1*120+90)}) circle (5pt);

\draw[orange, <-, rotate around={120:(0,0)}] ( {0.6*cos(1*120+90)},{0.6*sin(1*120+90)-0.1}) .. controls  (0,0-.05) .. ( {0.6*cos(2*120+90)},{0.6*sin(2*120+90)-0.1});

\end{tikzpicture}
%    \qquad % <----------------- SPACE BETWEEN PICTURES    
\begin{tikzpicture}[scale=.8]

\filldraw[orange, ultra thick, fill = orange!5] (-2,-2) -- (-2,2) -- (2,2) -- (2,-2) -- (-2,-2);

\draw[color=gray, dashed] (0,0) -- ({cos(0*120+90)},{sin(0*120+90)});
\draw[color=gray, dashed] (0,0) -- ({cos(1*120+90)},{sin(1*120+90)});
\draw[color=gray, dashed] (0,0) -- ({cos(2*120+90)},{sin(2*120+90)});

% \filldraw [orange] ({cos(0*120+90)},{sin(0*120+90)}) circle (5pt);
\filldraw [orange] ({cos(1*120+90)},{sin(1*120+90)}) circle (5pt);
\filldraw [orange] ({cos(2*120+90)},{sin(2*120+90)}) circle (5pt);

\end{tikzpicture}
\caption{An exchange process that cancels non-universal phase factor.}
\label{fig:exchange}
\end{figure}

\begin{remark} \label{rmk:cftLaughlin}
The procedure has an analog in the context of conformal field theory. One may consider $N$ copies of the same two-dimensional Euclidean conformal field theory and define twist field operators that correspond to twist defects. The topological spin of such twist fields is given by $\theta_N = e^{2 \pi i \frac{c_-}{24}(N-1/N)}$ \cite{knizhnik1987analytic}. For two-dimensional quantum many-body systems, twist defects have been discussed in \cite{barkeshli2013twist} where their statistics was analyzed using the effective field theory approach. For special Laughlin-like states of a system of particles (which are constructed using the correlation functions of a conformal field theory), the idea that the topological spin of twist defects is related to the chiral central charge has appeared in \cite{gromov2016geometric}.
\end{remark}

\begin{remark} \label{rmk:chargeambiguity}
While the procedure of producing twist defects is unambiguous at infinity, there is a local ambiguity near the point of insertion. Because of this ambiguity, the charge of the defect with respect to $\ZZ/N$ symmetry generated by the permutation is not fixed. It leads to an $N$th root of unity ambiguity in the topological spin $\theta_N$. Therefore, only $(\theta_N)^N$ gives an invariant in our construction, that fixes $c_- \bmod 24$ in conformal field theory\footnote{The same dependence on the chiral central charge also appears in the framing dependence of the Chern-Simons topological quantum field theory \cite{witten1989quantum}}. We do not know if it is possible to fix $\ZZ/N$-charge to make $\theta_N$ well-defined and provide a more refined invariant. We note that the same ambiguity appears in the choice of a $\ZZ/N$-angular momentum of an invertible state on $\RR^2$ with $\ZZ/N$ rotation symmetry. The latter is obtained from the twist defect via the map $z \to z^{1/N}$ if we regard $\RR^2$ as a complex plane with the coordinate $z \in \CCC$.
\end{remark}

For fermionic spin systems in an invertible phase, we can consider fluxes for fermionic parity. Such fluxes may also have a non-trivial phase factor in the exchange process. We will use this idea to define a $\ZZ/16$-valued index for fermionic invertible phases that distinguishes free fermionic states with different values of the Chern number $\nu \bmod 16$ in the same way as the sixteen-fold way proposed in \cite{kitaev2006anyons}. Combined with the index for permutation symmetry $S_N$ of $N$ copies, which together with the fermionic parity generates a hyperoctahedral symmetry group $B_N$, it distinguishes phases represented by free fermions with different values of the Chern number $\nu \bmod 48$.

\subsection{Organization of the paper}

In Section \ref{sec:generalities}, we give precise mathematical definitions of the systems we consider and of the equivalence classes of their states we are interested in. In Section \ref{sec:theindex}, we define the indices. We first discuss the technically easier case of bosonic systems and then explain how the discussion generalizes to fermionic systems. In Section \ref{sec:freefermions}, we consider the class of free (non-interacting) systems of fermions. We compute the indices defined in the previous section for invertible phases represented by such systems in terms of the Chern number $\nu$ and show that they depend on $\nu \bmod 48$. In Appendix \ref{app:SPTindex}, we recall the definition of a symmetry-protected index for bosonic systems.
\\

\noindent
{\bf Acknowledgements:} I'm grateful to Anton Kapustin for the discussions and comments on the draft. I would also like to thank Chris Bourne, Michael Levin, Abhinav Prem, Sahand Seifnashri, and Wilbur Shirley for useful conversations related to the subject matter of this paper. This work was supported by NSF Grant PHY-2207584 and the Ambrose Monell Foundation.
\\

\section{Generalities} \label{sec:generalities}

\subsection{Spin systems} \label{ssec:spinsystems}

Let $\hilb{K}$ be a complex Hilbert space with an anti-unitary involution $J$ (also called {\it complex conjugation} on $\hilb{K}$). There is an associated normed $*$-algebra generated by $\{c(f)\}_{f \in \hilb{K}}$ depending linearly on $f$ subject to the relations $\{c(f)^*,c(g)\} = \lal f,g \ral$\footnote{In our convention, the inner product $\lal\,\cdot\,,\,\cdot\,\ral$ on $\hilb{K}$ is anti-linear in the first argument.}, $c(f)^* = c(J f)$ with the norm $\|c(f)\| = \|f\|$. The completion of this $*$-algebra with respect to the norm is a $C^*$-algebra called {\it self-dual CAR algebra} (associated with $(\hilb{K}, J)$). Physically, this is the algebra of Majorana modes labeled by basis elements of $\hilb{K}$.

A {\it fermionic spin system} on an oriented space $\RR^d$ is defined by a finite-dimensional Hilbert space isomorphic to $\CCC^{2n}$ for some $n$ and a map $\Lambda \to \RR^d$ from a countable set $\Lambda$ called ``the lattice" that we assume to be uniformly locally finite (i.e. for some $l$, the number of points in any unit ball is less than $l$). We often identify $\Lambda$ and its subsets with their images in $\RR^d$. Elements of the lattice are called sites. Physically, the data encodes a system with $2n$ Majorana modes per site. For $\hilb{K} = l^2(\Lambda) \otimes \CCC^{2n}$ with a natural choice of complex conjugation $J$, the associated normed $*$-algebra $\cstar{A}_{\text{loc}}$ is called the algebra of local observables. It can be defined as the infinite $\ZZ/2$-graded tensor product of matrix algebras 
\beq
\cstar{A}_{\text{loc}}:= \varinjlim_{\Gamma} \cstar{A}_{\Gamma},
\eeq
\beq \label{eq:Aloc}
\cstar{A}_{\Gamma} \cong \hat{\bigotimes}_{j \in \Gamma} B(\hilb{V}_j),\footnote{Here and in the following, for a Hilbert space $\hilb{H}$, by $B(\hilb{H})$ we mean the algebra of bounded operators on $\hilb{H}$.}
\eeq
where $\hilb{V}_j \cong \hilb{V}$ are finite-dimensional $\ZZ/2$-graded Hilbert spaces isomorphic to a super vector space $\CCC^{2^{n-1}|2^{n-1}}$ and the inductive limit is taken over finite subsets $\Gamma \subset \Lambda$ with natural embeddings $\cstar{A}_{\Gamma} \to \cstar{A}_{\Gamma'}$ for $\Gamma \subset \Gamma'$. The $C^*$-algebra $\cstar{A}$ of the norm completion of $\cstar{A}_{\text{loc}}$ is called the {\it algebra of quasi-local observables}. For any $X \subset \RR^d$, we let $\cstar{A}_X \subset \cstar{A}$ be the operator norm completion of the $*$-subalgebra $\varinjlim_{\Gamma \subset X} \cstar{A}_{\Gamma}$. The algebra $\cstar{A}$ is equipped with the involutive automorphism\footnote{By an automorphism of a $*$-algebra we always mean a linear $*$-automorphism.} $\fparity$ corresponding to the grading. In the following, we use $\otimes$ for $\ZZ/2$-graded tensor products for fermionic spin systems as it is clear from the context. As usual, we call elements of $\ZZ/2$-graded vector spaces even/odd if their grading is 0/1.

The algebra of a {\it bosonic spin system} can be defined in the same way, by replacing $\hilb{V}_j$ and $B(\hilb{V}_j)$ with even subspaces $\hilb{V}^{(0)}_j \cong \CCC^{2^{n-1}|0}$ and $B(\hilb{V}^{(0)}_j)$, respectively. In that case, we can ignore the grading, and the tensor products in Eq. (\ref{eq:Aloc}) are the usual tensor products.

\begin{remark}
We could consider spin systems with arbitrary finite-dimensional on-site Hilbert spaces. However, we don't lose any generality by considering only those coming from self-dual CAR algebras (for the purposes of the classification of equivalence classes of states we are interested in) since we can always embed the former into the latter.
\end{remark}

For a compact group $G$, we say that a system {\it has $G$-symmetry} if $\hilb{V}$ is equipped with the structure of a unitary $\ZZ/2$-graded representation $u:G \to \End{\hilb{V}}$. In that case, $\cstar{A}$ is equipped with automorphisms $\{\sym^{(g)}\}_{g \in G}$ defined by $\sym^{(g)}(x) = \Ad_{u(g)} (x) := u(g)^* x u(g)$ for $x \in B(\hilb{V}_j)$, $g \in G$. These automorphisms commute with $\fparity$.

An automorphism $\alpha$ is called graded if it commutes with $\fparity$. In the presence of $G$-symmetry, we say that $\alpha$ is $G$-invariant if it commutes with $\sym^{(g)}$ for any $g$. By the abuse of notation, we often identify automorphisms $\alpha$ of $\cstar{A}_{X}$ and automorphisms $\alpha \otimes \Id_{\cstar{A}_{X^c}}$ of $\cstar{A}$ which act trivially on $\cstar{A}_{X^c}$, where $X^c$ is the complement of $X$ in $\RR^d$. We say that an automorphism $\alpha$ is {\it on-site} if there is a collection of unitary elements $\{ u_j \in B(\hilb{V}_j) \}_{j \in \Lambda}$ such that for a finite $\Gamma \subset \Lambda$ and $x \in \cstar{A}_{\Gamma}$, we have $\alpha(x) = (\prod_{j \in \Gamma} u_j)x(\prod_{j \in \Gamma} u^*_j)$. In particular, automorphisms $\fparity$ and $\sym^{(g)}$ are on-site. For any subset $X \subset \RR^d$ and an on-site automorphism $\alpha$, we let $\alpha_{X}$ be the restriction of $\alpha$ to $\cstar{A}_{X}$. 

Given two spin systems $\cstar{A}_1$, $\cstar{A}_2$, we can produce a new spin system in a natural way with the algebra of observables being $\cstar{A}_1 \otimes \cstar{A}_2$. We call this spin system a {\it stack} of two spin systems. If $\alpha$ is an automorphism of $\cstar{A}_1$, it has a natural extension $\alpha \otimes \Id$ to an automorphism of $\cstar{A}_1 \otimes \cstar{A}_2$. By the abuse of notation, we often identify $\alpha$ and its natural extension to the stack of spin systems.

\subsection{States} \label{ssec:states}

By a state over a spin system, we mean a state over the associated $C^*$-algebra. For an automorphism $\alpha$ and a state $\Psi$ over spin system $\cstar{A}$, we write $\Psi \alpha$ for the state $\Psi \alpha (x) := \Psi(\alpha(x))$, $x \in \cstar{A}$. We call a state $\Psi$ graded if $\Psi = \Psi \fparity$ and/or $G$-invariant (if the system has $G$-symmetry) if $\Psi = \Psi \sym^{(g)}$ for any $g \in G$. For a subset $X \subset \RR^d$ and a state $\Psi$ over a spin system $\cstar{A}$, by $\Psi|_X$ we mean the state of $\cstar{A}_X$ which is the restriction of the state $\Psi$. 

We say that a state over a spin system with $G$-symmetry (possibly, with trivial $G$) is {\it trivial} if it is a tensor product of vector states over $B(\hilb{V}_j)$ such that the vectors are even and $G$-invariant.

Given states $\Psi_1$, $\Psi_2$ over spin systems $\cstar{A}_1$, $\cstar{A}_2$, respectively, there is a natural state $\Psi_1 \otimes \Psi_2$ over the stack $\cstar{A}_1 \otimes \cstar{A}_2$ of $\cstar{A}_1$ and $\cstar{A}_2$ which we also call the stack (of $\Psi_1$ and $\Psi_2$). We will say that a state $\Psi$ over a spin system $\cstar{A}$ is a {\it trivial extension} of a state $\Psi_0$ over a system $\cstar{A}_0$ if $\cstar{A} \cong \cstar{A}_0 \otimes \cstar{A}_1$ and $\Psi = \Psi_0 \otimes \Psi_1$ for a trivial state $\Psi_1$ over $\cstar{A}_1$.

For a state $\Psi$ over a $C^*$-algebra $\cstar{A}$, there is an associated cyclic representation called Gelfand-Naimark-Segal (GNS) representation. We denote the Hilbert space of this representation by $\hilb{H}_{\Psi}$, the representation itself by $\pi_{\Psi}:\cstar{A} \to B(\hilb{H}_{\Psi})$ and the cyclic vector representing the state $\Psi$ by $\Omega_{\Psi} \in \hilb{H}_{\Psi}$. We have $\Psi(x) = \lal \Omega_{\Psi} , \pi_{\Psi}(x) \Omega_{\Psi} \ral$ for any $x \in \cstar{A}$. The data $(\hilb{H}_{\Psi}, \pi_{\Psi}, \Omega_{\Psi})$ is unique up to unitary equivalence. The associated von Neumann algebra $\pi_{\Psi}(\cstar{A})''$, which is the double commutant of $\pi_{\Psi}(\cstar{A})$ in $B(\hilb{H}_{\Psi})$, is denoted $\cstar{A}_{\Psi} \subset B(\hilb{H}_{\Psi})$.

Two states $\Psi_1$, $\Psi_2$ are called {\it unitarily equivalent}, if their associated representations $\pi_{\Psi_1}, \pi_{\Psi_2}$ are unitarily equivalent. For an automorphism $\alpha$ of a $C^*$-algebra $\cstar{A}$, if $\Psi$ is unitarily equivalent to $\Psi \alpha$, then there is a unitary operator $U \in B(\hilb{H}_{\Psi})$ such that $\pi_{\Psi}(\alpha(x)) = U \pi_{\Psi}(x) U^*$. In that case, we say that $\alpha$ can be {\it implemented} in the GNS representation, and call $U$ an implementer of $\alpha$. By Kadison transitivity theorem, any pure state $\Psi_1$ that is unitarily equivalent to a pure state $\Psi_2$ can be obtained from $\Psi_2$ by conjugation with a unitary element $u \in \cstar{A}$ of the $C^*$-algebra, i.e. $\Psi_1 = \Psi_2 \Ad_u$.

\subsection{Quasi-equivalence} \label{ssec:quasiequivalence}

It is useful to introduce a notion of equivalence of two states over a spin system distinguishing their behavior at infinity. A concrete way to formalize it in the context of $C^*$-algebras is the notion of {\it quasi-equivalence}. 

A state over a spin system $\cstar{A}$ is {\it factorial}\footnote{More generally, a state $\Psi$ over a $C^*$-algebra $\cstar{A}$ is called factorial if the von Neumann algebra $\cstar{A}_{\Psi}$ is a factor, i.e. its center is trivial. This more general characterization is equivalent to the one in the main text in the special case when $\cstar{A}$ is a uniformly hyperfinite algebra \cite{powers1967representations} (which is the case for spin systems).} if for any $\eps > 0$ and a finite subset $X \subset \Lambda$, there is a subset $Y \subset \Lambda$, such that $X \subset Y$ and for any $x \in \cstar{A}_X$ and $y \in \cstar{A}_{Y^c}$ we have $|\Psi(x y)-\Psi(x)\Psi(y)| \leq \eps$. Loosely speaking, this condition means that correlations between observables have decay at infinity. All pure states are factorial.

Two factorial states $\Psi_1$, $\Psi_2$ over a spin system are quasi-equivalent\footnote{More generally, two factorial states $\Psi_1$, $\Psi_2$ over a $C^*$-algebra $\cstar{A}$ are called quasi-equivalent if there is a $*$-isomorphism of von Neumann algebras $\varphi:\cstar{A}_{\Psi_1} \to \cstar{A}_{\Psi_2}$ such that $\varphi(\pi_{\Psi_1}(x)) = \pi_{\Psi_2}(x)$ for all $x \in \cstar{A}$. This more general characterization is equivalent to the one in the main text in the special case when $\cstar{A}$ is a uniformly hyperfinite algebra \cite{powers1967representations}.} if for any $\eps>0$ there is a subset $X \subset \Lambda$ such that $|\Psi_1(x) - \Psi_2(x)| \leq \eps$ for any $x \in \cstar{A}_{X^c}$. Loosely speaking, this condition means that the states coincide at infinity. If states are pure, then they are quasi-equivalent if and only if they are unitarily equivalent.

\subsection{Invertible phases} \label{ssec:SREstates}

We are interested in equivalence classes of pure states of spin systems, which are supposed to formalize mathematically the notion of a phase at zero temperature. Informally, two systems are in the same phase if their ground states can be connected by an adiabatic unitary evolution that preserves locality, possibly after adding auxiliary decoupled degrees of freedom. When the systems have gapped ground states, this definition is equivalent to saying that the Hamiltonians of the systems can be deformed into each other without closing the gap (the gap closing would correspond to phase transition). 

In the algebraic approach, unitary evolutions are implemented by automorphisms of the algebra of observables. There are different ways to characterize the locality of automorphisms mathematically, and it is not immediately clear which characterization would provide a physically reasonable definition of the phase. One framework is to take a class of Hamiltonians of the system (which are derivations of the algebra of observables) with some fixed class of decay of interactions and consider automorphism that can be generated by such Hamiltonians (i.e. unitary evolutions implemented by a one-parametric family of such Hamiltonians). For example, in \cite{LocalNoether, bachmann2024many}, the class of interactions with faster than any power decay was used to define the notion of phases and their invariants in the presence of a Lie group symmetry. The framework is motivated by the fact that ground states of gapped Hamiltonians are related by such automorphisms via spectral flow \cite{bachmann2012automorphic} provided the Hamiltonians have a sufficiently fast rate of decay. In this note, we use a weaker notion of equivalence inspired by \cite{naaijkens2022split}. The indices we define would automatically be the indices for spectral flow automorphic equivalence (see Remark \ref{rmk:autlocality}). 

\begin{figure}
\centering
\begin{tikzpicture}[scale=.5]

\filldraw[rotate around={0:(0,0)}, color=orange, fill=orange!10, very thick] ({5*cos(20)},{-5*sin(20)}) -- (0,0) -- ({5*cos(20)},{5*sin(20)});
\draw[ thick, <->, gray] ({4*cos(20)},{-4*sin(20)}) arc (-20:20:4);
\node  at (6, 0) {$2 \phi$};

\filldraw[rotate around={35:(0,0)}, color=gray, very thick] (0,0) -- (5,0);
\filldraw[rotate around={-35:(0,0)}, color=gray, very thick] (0,0) -- (5,0);
\draw[ thick, <->, gray] ({4*cos(35)},{-4*sin(35)}) arc (-35:-20:4);
\draw[ thick, <->, gray] ({4*cos(35)},{4*sin(35)}) arc (35:20:4);
\node  at ({6*cos(27.5)},{-6*sin(27.2)}) {$\theta$};
\node  at ({6*cos(27.5)},{6*sin(27.5)}) {$\theta$};

\end{tikzpicture}
\caption{$\theta$-thickening of a cone is formed by gray half-lines.}
\label{fig:thickening}
\end{figure}
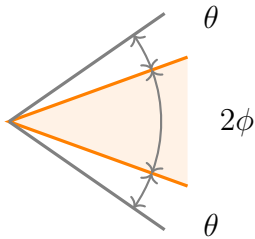

In the remainder of the paper, we consider systems on $\RR^2$. By a cone on $\RR^2$, we mean an open region associated with a point of $\RR^2$, called the apex of the cone, and an interval of a circle at infinity of $\RR^2$. The cone is spanned by the half-lines coming out of the apex and pointed in the directions of the points of the interval. By the boundary of a cone, we mean the boundary of its closure. For a cone with half-angle $\phi$ and $\theta \in (0, \pi-\phi)$, by {\it $\theta$-thickening} we mean the cone with the same apex and axis and with half-angle $\phi+\theta$ (see Fig. \ref{fig:thickening}). We will call a cone generic (for a given spin system) if it has no sites on the boundary.
By a {\it good conical partition} (see Fig. \ref{fig:ExampleOfConicalPartition}) of the lattice $\Lambda \subset \RR^2$, we mean a collection of generic cones $\{A_0, A_1,..., A_n, B_{01}, B_{12}, ..., B_{(n-1)n}, B_{n0}\}$ with the same apex such that 1) the cones $A_0,...,A_n$ are disjoint, the closure of their union is $\RR^2$, the order $0,1,2,...$ is consistent with the orientation of $\RR^2$; 2) the closures of the cones $B_{01},B_{12}, ..., B_{n0}$ intersect only at their apex; 3) the cone $B_{a(a+1)}$\footnote{In the following, for a given conical partition $\{A_0,...,A_n, B_{01},...,B_{n0}\}$, by $(a+1)$ in the subscripts we always mean $(a+1) \bmod n$} contains the half-line separating $A_a$ and $A_{a+1}$ (excluding the apex). Let $\{A_0,A_1,...,A_n,B_{01},B_{12}, ..., B_{(n-1)n}, B_{n0}\}$ be a good conical partition. By the {\it spread} of a cone $C$ with respect to this partition, we mean the union of $C$, the cones $\{ A_a \}$ that intersect $C$ and the cones $\{B_{a(a+1)} \}$ that intersect those cones $A_a$. By the {\it range} of a good conical partition we mean a minimal $\theta \in (0,\pi]$, such that for any cone $C$ with the same apex, its spread is contained in its $\theta$-thickening. 

We will say that an automorphism $\alpha$ is associated with a good conical partition if there exist automorphisms $\beta_0, \beta_1,...,\beta_n, \beta_{01},..., \beta_{n0}$ of $\cstar{A}_{A_0}$, $\cstar{A}_{A_1}$, ..., $\cstar{A}_{A_n}$, $\cstar{A}_{B_{01}}$, ... $\cstar{A}_{B_{n0}}$, respectively, such that $\alpha = \prod_{a} \beta_a \prod_{a} \beta_{a(a+1)}$. If $C$ is a generic cone with the same apex and $x \in \cstar{A}_C$, then $ \alpha(x) \in \cstar{A}_{C'}$ where $C'$ is the spread of $C$.

\begin{figure}
\centering
\begin{tikzpicture}[scale=.5]

\filldraw[rotate around={-18.4349:(0,0)}, color=orange, fill=orange!10, very thick] (5,0) -- (0,0) -- (4,3);
\filldraw[rotate around={72-18.4349:(0,0)}, color=orange, fill=orange!10, very thick] (5,0) -- (0,0) -- (4,3);
\filldraw[rotate around={2*72-18.4349:(0,0)}, color=orange, fill=orange!10, very thick] (5,0) -- (0,0) -- (4,3);
\filldraw[rotate around={3*72-18.4349:(0,0)}, color=orange, fill=orange!10, very thick] (5,0) -- (0,0) -- (4,3);
\filldraw[rotate around={4*72-18.4349:(0,0)}, color=orange, fill=orange!10, very thick] (5,0) -- (0,0) -- (4,3);

\filldraw [gray] (0,0) circle (5pt);
\filldraw[rotate around={0:(0,0)}, color=gray, very thick] (0,0) -- (5,0);
\filldraw[rotate around={1*72:(0,0)}, color=gray, very thick] (0,0) -- (5,0);
\filldraw[rotate around={2*72:(0,0)}, color=gray, very thick] (0,0) -- (5,0);
\filldraw[rotate around={3*72:(0,0)}, color=gray, very thick] (0,0) -- (5,0);
\filldraw[rotate around={4*72:(0,0)}, color=gray, very thick] (0,0) -- (5,0);

\node  at (6,0) {$B_{50}$};
\node  at ({6*cos(1*72)}, {6*sin(1*72)}) {$B_{01}$};
\node  at ({6*cos(2*72)}, {6*sin(2*72)}) {$B_{12}$};
\node  at ({6*cos(3*72)}, {6*sin(3*72)}) {$B_{23}$};
\node  at ({6*cos(4*72)}, {6*sin(4*72)}) {$B_{34}$};

\node  at ({4*cos(0*72+32)}, {4*sin(0*72+32)}) {$A_0$};
\node  at ({4*cos(1*72+32)}, {4*sin(1*72+32)}) {$A_1$};
\node  at ({4*cos(2*72+32)}, {4*sin(2*72+32)}) {$A_2$};
\node  at ({4*cos(3*72+32)}, {4*sin(3*72+32)}) {$A_3$};
\node  at ({4*cos(4*72+32)}, {4*sin(4*72+32)}) {$A_4$};

\end{tikzpicture}
\caption{A good conical partition.
}
\label{fig:ExampleOfConicalPartition}
\end{figure}
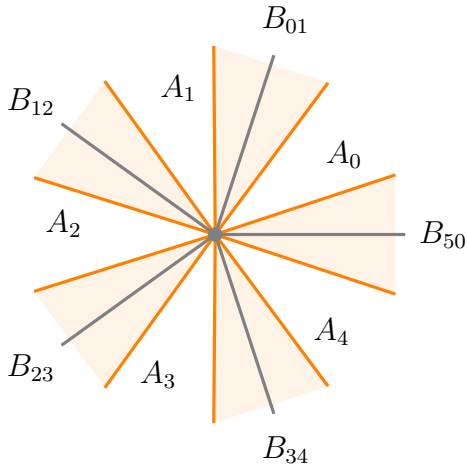

\begin{remark} \label{rmk:autlocality}
The definition of automorphisms associated with a conical partition is similar in spirit to the definition of quantum circuits. The locality, however, is imposed by a cover of a circle at infinity of $\RR^2$ rather than by the metric in the bulk. We also note that for any good conical partition, an automorphism implementing the spectral flow for a family of gapped Hamiltonians \cite{bachmann2012automorphic} can be decomposed into an inner automorphism and an automorphism associated with the partition (see \cite[Section 3]{naaijkens2022split}).
\end{remark}

\begin{definition}
We will say that two states over a spin system (possibly, with $G$-symmetry) $\Psi_0$, $\Psi_1$ are {\it in the same phase}, if for any good conical partition, there are trivial extensions $\tilde{\Psi}_0$, $\tilde{\Psi}_1$ of $\Psi_0$, $\Psi_1$ and a graded ($G$-invariant) automorphism $\alpha$ associated with the partition such that $\tilde{\Psi}_0 \alpha$ is unitarily equivalent to $\tilde{\Psi}_1$. The relation (for two states to be in the same phase) is an equivalence relation with equivalence classes called {\it phases}.   
\end{definition}
The set of phases has the structure of a commutative monoid induced by stacking. All trivial states are in the same phase which we call {\it trivial phase}, and it is a unit of the monoid.

\begin{definition}
A state $\Psi$ over a spin system (or the phase of $\Psi$) is called {\it invertible} if there is another spin system with some state $\Psi'$ over it such that $\Psi \otimes \Psi'$ is in a trivial phase. In that case, we call $\Psi'$ an inverse of $\Psi$.    
\end{definition}
Invertible phases correspond to invertible elements in the monoid and form an abelian group. By an index for invertible phases, we mean a homomorphism from this group to some abelian group.

\subsection{Split property} \label{ssec:splitproperty}

We will say that a state $\Psi$ over a spin system satisfies the {\it split property} if for any two disjoint generic cones $B_L$, $B_R$ with boundaries intersecting only at their apex, the states $\Psi|_{B_L \cup B_R}$ and $\Psi|_{B_L} \otimes  \Psi|_{B_R}$ are quasi-equivalent.

\begin{lemma} \label{lma:splitporperty}
Let $\Psi$ be an invertible state over a spin system $\cstar{A}$. Then it satisfies the split property.
\end{lemma}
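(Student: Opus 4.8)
The plan is to reduce the split property of $\Psi$ to the manifestly true split property of a trivial state composed with an automorphism associated with a good conical partition, to transport it through the equivalence that witnesses invertibility, and then to strip off the auxiliary inverse state using that restrictions of pure states to cones are automatically factor states.

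First I would assemble four elementary facts. (i) For any pure state $\mu$ over a spin system and any $X \subset \RR^2$, the state $\mu|_X$ is factorial: indeed $\cstar{A}_X$ and $\cstar{A}_{X^c}$ generate $\cstar{A}$, so $\pi_\mu(\cstar{A}_X)'' \vee \pi_\mu(\cstar{A}_X)' \supseteq \pi_\mu(\cstar{A}_X)'' \vee \pi_\mu(\cstar{A}_{X^c})'' = \pi_\mu(\cstar{A})'' = B(\hilb{H}_\mu)$, whence $\pi_\mu(\cstar{A}_X)''$ is a factor; in particular all cone-restrictions occurring below are factorial, so the split property is literally a statement about normal $*$-isomorphisms of GNS von Neumann algebras. (ii) If $\Phi, \Phi'$ are pure and unitarily equivalent then $\Phi|_X \sim \Phi'|_X$ for every $X$: by Kadison transitivity $\Phi' = \Phi\,\Ad_u$ with $u$ unitary in $\cstar{A}$, the GNS triple of $\Phi\,\Ad_u$ may be taken to be $(\hilb{H}_\Phi, \pi_\Phi, \pi_\Phi(u)\Omega_\Phi)$, and with $M := \pi_\Phi(\cstar{A}_X)''$ both projections $[M\Omega_\Phi]$ and $[M\pi_\Phi(u)\Omega_\Phi]$ have central support $1$ in $M'$ because $M \vee M' = B(\hilb{H}_\Phi)$ as in (i), so the two cyclic subrepresentations of $\pi_\Phi|_{\cstar{A}_X}$ are quasi-equivalent. (iii) For factor states quasi-equivalence is detected on tensor legs: if $\phi_1\otimes\phi_2 \sim \psi_1\otimes\psi_2$ with all four states factorial, then a normal $*$-isomorphism $M_1\bar\otimes M_2 \to N_1\bar\otimes N_2$ intertwining the two $C^*$-algebra representations maps $1\bar\otimes M_2$ onto $1\bar\otimes N_2$, hence maps its relative commutant $M_1\bar\otimes Z(M_2) = M_1\bar\otimes 1$ (using that $M_2$ is a factor) onto $N_1\bar\otimes 1$, giving $\phi_1 \sim \psi_1$; the converse is just tensoring the isomorphisms. (iv) Given a good conical partition of range $\rho$, the associated automorphism $\alpha = \prod_a\beta_a\prod_a\beta_{a(a+1)}$ and its inverse $\alpha^{-1} = \prod_a\beta_{a(a+1)}^{-1}\prod_a\beta_a^{-1}$ both map $\cstar{A}_C$ into $\cstar{A}_{C(2\rho)}$ for every generic cone $C$ with the common apex, where $C(\theta)$ denotes the $\theta$-thickening.

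Next I would fix disjoint generic cones $B_L, B_R$ with boundaries meeting only at the apex; the goal is $\Psi|_{B_L\cup B_R} \sim \Psi|_{B_L}\otimes\Psi|_{B_R}$. Since $\overline{B_L}\cap\overline{B_R}$ is the apex, the cones are angularly separated, so there is $\eta > 0$ with $B_L(\eta)\cap B_R(\eta) = \emptyset$; choose a good conical partition of range $\le \eta/2$. If $\Xi$ is any trivial state and $\beta$ is an automorphism with $\beta(\cstar{A}_C)\subseteq\cstar{A}_{C(\eta)}$ for all cones $C$ — by (iv), $\beta = \alpha$ and $\beta = \alpha^{-1}$ qualify — then $\beta$ restricts to an injective $*$-homomorphism $\cstar{A}_{B_L}\otimes\cstar{A}_{B_R} \to \cstar{A}_{B_L(\eta)}\otimes\cstar{A}_{B_R(\eta)}$ sending each tensor leg into the corresponding one, using disjointness of $B_L(\eta)$ and $B_R(\eta)$, so it has the form $\beta_L\otimes\beta_R$; precomposing the product state $\Xi|_{B_L(\eta)}\otimes\Xi|_{B_R(\eta)}$ with it yields a product state, i.e. $(\Xi\beta)|_{B_L\cup B_R} = (\Xi\beta)|_{B_L}\otimes(\Xi\beta)|_{B_R}$. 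In particular $\Xi\alpha^{-1}$ has the split property for $(B_L, B_R)$.

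Finally, the assembly and descent. By definition of invertibility there is an inverse $\Psi'$; applying the definition of ``trivial phase'' to the chosen partition produces trivial states $\Psi_1, \Xi$ and a graded automorphism $\alpha$ associated with the partition such that $\Psi\otimes\Psi'\otimes\Psi_1 \sim_{u.e.} \Xi\alpha^{-1}$, all states being pure. By the previous paragraph $\Xi\alpha^{-1}$ splits for $(B_L, B_R)$, and by (ii)--(iii) this transports to $\Psi\otimes\Psi'\otimes\Psi_1$: writing the restriction to $B_L\cup B_R$ as $\Psi|_{B_L\cup B_R}\otimes(\Psi'\otimes\Psi_1)|_{B_L\cup B_R}$ and the product of restrictions, after reordering, as $\bigl(\Psi|_{B_L}\otimes\Psi|_{B_R}\bigr)\otimes\bigl((\Psi'\otimes\Psi_1)|_{B_L}\otimes(\Psi'\otimes\Psi_1)|_{B_R}\bigr)$, all four factors are factorial by (i), so (iii) yields $\Psi|_{B_L\cup B_R} \sim \Psi|_{B_L}\otimes\Psi|_{B_R}$. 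Since $(B_L, B_R)$ was arbitrary, $\Psi$ satisfies the split property. The step I expect to be the crux is exactly this descent from $\Psi\otimes\Psi'$ down to $\Psi$, i.e. stripping off the auxiliary inverse: the clean way to make it legal is fact (i), which promotes all cone-restrictions of the pure states in sight to factor states and thereby turns the otherwise-false ``cancellation'' of quasi-equivalences into the harmless relative-commutant computation (iii). Secondary points requiring care are the bookkeeping in (iv) and, for fermionic systems, the routine replacement of tensor products and commutants by their $\ZZ/2$-graded counterparts.
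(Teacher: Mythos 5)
Your proof is correct, and its skeleton matches the paper's: use invertibility and the definition of the trivial phase to present (an extension of) $\Psi$, up to unitary equivalence, as a trivial state composed with the inverse of a conical-partition automorphism; observe that the latter is an exact product state across the two cones because of the automorphism's locality; then transport the splitting back through the equivalences. The genuine differences are two. First, the paper disposes of the auxiliary inverse at the very start with the one-line assertion that the split property for $\Psi\otimes\Psi'$ implies it for $\Psi$; your fact (iii) --- the relative-commutant cancellation $M_1\bar{\otimes}Z(M_2)=M_1\bar{\otimes}1$ for factor states, enabled by the factoriality statement (i) --- is precisely the justification of that assertion, so you have filled in a step the paper leaves to the reader (and you apply the cancellation at the end rather than the beginning, which changes nothing). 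Second, the geometry is organized differently: the paper introduces an intermediate cone $C$ with $B_L\subset C$, $B_R\subset C^c$, uses the partition $\{C,C^c,B_U,B_D\}$ with $B_U,B_D$ disjoint from $B_L\cup B_R$, first establishes that $\Phi=\tilde{\Psi}\beta_U\beta_D$ is quasi-equivalent to $\Phi|_{C}\otimes\Phi|_{C^c}$, and only then restricts to $B_L\cup B_R$, where $\Phi$ and $\tilde{\Psi}$ agree because $\beta_U,\beta_D$ act away from that region; you instead take the partition's range small relative to the angular separation of $B_L$ and $B_R$ and factorize the automorphism directly as $\beta_L\otimes\beta_R$ on $\cstar{A}_{B_L\cup B_R}$. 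Both routes work; the paper's needs no constraint on the range of the partition, while yours avoids the intermediate cone and keeps every quasi-equivalence explicit. The only caveat is the one you flag yourself: in the fermionic case the identifications $\hilb{H}_{\phi_1\otimes\phi_2}\cong\hilb{H}_{\phi_1}\otimes\hilb{H}_{\phi_2}$ and $(M_1\bar{\otimes}M_2)'=M_1'\bar{\otimes}M_2'$ underlying (i)--(iii) require their $\ZZ/2$-graded versions with Klein factors; this is standard but not purely notational, and the paper is equally terse about it.
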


\begin{proof}
Let $\Psi'$ be an inverse of $\Psi$. If the split property holds for $\Psi \otimes \Psi'$, then it holds for $\Psi$ as well. Hence, without loss of generality, we may assume that $\Psi$ in the lemma is in a trivial phase.

Let us pick a good conical partition $\{A, A^c, B_L, B_R\}$, and a generic cone $C$ with the same apex, such that $B_L$ is in $C$, while $B_R$ is in $C^c$ (see Fig. \ref{fig:conesBLBRBUBD}). We let $B_U$, $B_D$ be the cones of the complement of $B_L \cup B_R$ so that $B_U$ is contained in $A$, while $B_D$ is contained in $A^c$. For any pure state $\Psi_0$ that is quasi-equivalent to a trivial state and graded automorphisms $\beta_C$, $\beta_{C^c}$ of  $\cstar{A}_{C}$, $\cstar{A}_{C^c}$, respectively, the state $\Phi_0 = \Psi_0 \beta_C \beta_{C^c}$ is quasi-equivalent to $\Phi_0|_{C} \otimes \Phi_0|_{C^c}$. By the assumptions on $\Psi$, we can choose its trivial extension $\tilde{\Psi}$ and graded automorphisms $\beta_U$, $\beta_D$, $\beta_C$, $\beta_{C^c}$ of $\cstar{A}_{B_U}$, $\cstar{A}_{B_D}$, $\cstar{A}_{C}$, $\cstar{A}_{C^c}$, respectively, such that $\tilde{\Psi}\beta_U \beta_D \beta_C \beta_{C^c}$ is quasi-equivalent to a trivial state $\Psi_0$. Hence, 
$\Phi = \tilde{\Psi} \beta_U \beta_D$ is quasi-equivalent to $\Phi|_{C} \otimes \Phi_{C^c}$. Therefore, $\Phi|_{B_L \cup B_R}$ is quasi-equivalent to $\Phi|_{B_L} \otimes \Phi|_{B_R}$. Since the restrictions of $\Phi$ and $\tilde{\Psi}$ to $B_L \cup B_R$ coincide and $\tilde{\Psi} = \Psi \otimes \Psi'$ for some trivial state $\Psi'$, $\Psi|_{B_L \cup B_R}$ is quasi-equivalent to $\Psi|_{B_L} \otimes \Psi|_{B_R}$.
\end{proof}

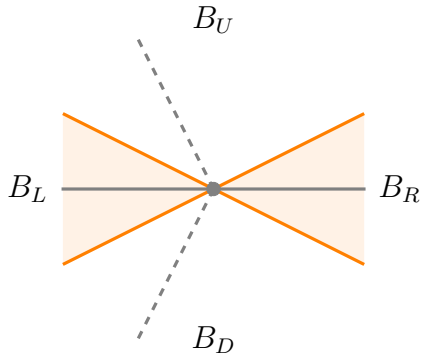
\begin{figure}
\centering
\begin{tikzpicture}[scale=.5]

\filldraw[orange,fill=orange!10, very thick] (-3.4641-1/2,-2) -- (0,0) -- (-3.4641-1/2,2);
\filldraw[orange,fill=orange!10, very thick] (3.4641+1/2,-2) -- (0,0) -- (3.4641+1/2,2);

\draw[gray, very thick, dashed] (0,0) -- (-2,-4);
\draw[gray, very thick, dashed] (0,0) -- (-2,4);

\draw[gray, very thick] (4,0) -- (0,0) -- (-4,0);

\node  at (0,-1*4) {$B_{D}$}; 
\node  at (0,1.5*3) {$B_U$}; 
\node  at (-1.5*1.7321*3/2-1,0) {$B_L$};
\node  at (1.5*1.7321*3/2+1,0) {$B_R$};

\filldraw [gray] (0,0) circle (5pt);

\end{tikzpicture}
\caption{Cones $B_L,B_R,B_U,B_D$ are separated by orange half-lines. Gray solid half-lines correspond to the boundary of $A$. Gray dashed half-lines correspond to the boundary of $C$.
}
\label{fig:conesBLBRBUBD}
\end{figure}

\begin{lemma} \label{lma:supercommutingimplementers}
Let $\Psi$ be an invertible state over a spin system $\cstar{A}$, $B_L$ and $B_R$ be disjoint cones with boundaries intersecting only at their apex, $\beta_L$ and $\beta_R$ be automorphisms of $\cstar{A}_{B_L}$ and $\cstar{A}_{B_R}$, respectively, which preserves the unitary equivalence class of $\Psi$. Let $\fparityU$, $U_L$, $U_R$ be the implementers of $\fparity$, $\beta_L$, $\beta_R$, respectively, in the GNS representation of $\Psi$. Then $U_L U_R = (-1)^{\eps_L \eps_R} U_R U_L$, where $\eps_{L,R} = 0, 1$ are defined by $\fparityU U_{L,R} = (-1)^{\eps_{L,R}} U_{L,R} \fparityU$.
\end{lemma}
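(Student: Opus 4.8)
The plan is to exploit the split property (Lemma \ref{lma:splitporperty}) together with a "doubling" trick that converts the commutation question for $U_L$, $U_R$ into a statement about operators acting on decoupled tensor factors. First I would pass to a good conical partition of the form $\{A,A^c,B_L,B_R\}$ with all four cones sharing the common apex and with $B_L$, $B_R$ the given cones (shrinking them slightly if needed so that a partition of this shape exists). By Lemma \ref{lma:splitporperty}, $\Psi$ satisfies the split property, so $\Psi|_{B_L\cup B_R}$ is quasi-equivalent to $\Psi|_{B_L}\otimes\Psi|_{B_R}$; more usefully, arguing as in the proof of that lemma with a cone $C$ containing $B_L$ and whose complement contains $B_R$, one gets that the GNS representation of $\Psi$, restricted to $\cstar{A}_{B_L}\cup\cstar{A}_{B_R}$ (together with $\fparity$), splits as a graded tensor product $\hilb{H}_L\otimes\hilb{H}_R$ up to quasi-equivalence, with $\pi_\Psi(\cstar{A}_{B_L})$ acting on the first factor and $\pi_\Psi(\cstar{A}_{B_R})$ on the second, and with $\fparityU$ acting as $\fparityU_L\otimes\fparityU_R$ with respect to the grading.

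Next I would construct the implementers in this split picture. Since $\beta_L$ preserves the unitary equivalence class of $\Psi$ and is supported on $B_L$, and since $\Psi|_{B_L}$ is (quasi-equivalent to) a factor state, $\beta_L$ is implemented by a unitary $U_L$ that can be taken to lie in the von Neumann algebra generated by $\pi_\Psi(\cstar{A}_{B_L})$ and $\fparityU$ — i.e. $U_L = U_L^{(0)}\otimes \Id_R$ if $\eps_L=0$ and $U_L = U_L^{(0)}\otimes\fparityU_R$ (or an analogous "odd" decomposition) if $\eps_L=1$, and symmetrically for $U_R$. The sign in $U_LU_R = (-1)^{\eps_L\eps_R}U_RU_L$ then falls out of the Koszul sign rule for the graded tensor product: two operators supported on complementary factors of a $\ZZ/2$-graded tensor product commute if at least one is even and anticommute if both are odd, and the parities of $U_L$, $U_R$ are precisely $\eps_L$, $\eps_R$ by definition. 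The one subtlety is that implementers are only unique up to a phase, but a scalar ambiguity cannot affect a commutator identity, so this does not interfere.

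The step I expect to be the main obstacle is making the "split of the GNS representation" precise and compatible with the grading simultaneously: quasi-equivalence gives an isomorphism of von Neumann algebras, but one has to check that under this isomorphism the global parity implementer $\fparityU$ really does go to $\fparityU_L\otimes\fparityU_R$ (acting via the Koszul convention), rather than to some twisted version, and that the unitary $U_L$ implementing $\beta_L$ genuinely lives in the appropriate graded-local von Neumann algebra $(\pi_\Psi(\cstar{A}_{B_L})\cup\{\fparityU\})''$ rather than merely in $B(\hilb{H}_\Psi)$. The first point should follow because $\fparity$ is on-site and hence respects the decomposition $\cstar{A}_{B_L\cup B_R}\cong\cstar{A}_{B_L}\otimes\cstar{A}_{B_R}$ on the nose; the second because $\Psi|_{B_L}$ restricted to the even subalgebra is factorial, so an automorphism preserving its class is implemented inside the corresponding factor, and adjoining $\fparityU$ accounts for the possible odd part. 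Once these two technical points are nailed down, the sign is forced and the proof is complete.
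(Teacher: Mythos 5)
Your argument takes a genuinely different route from the paper's, and it has a gap at precisely the point you identify as ``the main obstacle'': the assertion that $U_L$ ``can be taken to lie in'' the graded-local von Neumann algebra $\left(\pi_{\Psi}(\cstar{A}_{B_L})\cup\{\fparityU\}\right)''$. Since $\Psi$ is pure, $\pi_{\Psi}$ is irreducible and the implementer of $\beta_L$ is unique up to a phase, so there is no freedom to ``take'' it anywhere: either it lies in that algebra or it does not, and proving that it does is exactly a Haag-duality statement for the cone $B_L$, which is neither assumed nor established anywhere in the paper (and is a delicate property even for concrete models). What you get for free is only $U_L\in\pi_{\Psi}(\cstar{A}_{B_L^c})'$; the split property produces an interpolating type I factor $\cstar{N}$ with $\pi_{\Psi}(\cstar{A}_{B_L})''\subseteq\cstar{N}\subseteq\pi_{\Psi}(\cstar{A}_{B_R})'$, but these inclusions point the wrong way to force $U_L\in\cstar{N}$ and $U_R\in\cstar{N}'$. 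Your proposed justification --- that $\Psi|_{B_L}$ is factorial, hence an automorphism preserving its class is ``implemented inside the corresponding factor'' --- is not correct: factoriality (or even standard form) yields an implementer in $B(\hilb{H})$ that normalizes the factor, not an element of the factor itself. So the Koszul-sign conclusion does not follow from what you have established.

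The paper's proof avoids localization of implementers entirely. It observes three stable facts: (i) if the lemma holds for a state $\Phi$, it holds for $\Psi=\Phi\alpha$ whenever $\alpha$ is a graded automorphism associated with a good conical partition of sufficiently small range, because the commutation phases of the implementers of $\fparity,\beta_{L,R}$ in the GNS representation of $\Psi$ equal those of the implementers of $\fparity,\alpha\beta_{L,R}\alpha^{-1}$ in the GNS representation of $\Phi$, and $\alpha\beta_{L,R}\alpha^{-1}$ are still supported in slight thickenings of $B_L,B_R$ whose closures meet only at the apex; (ii) stacking with another system does not affect the statement; (iii) for a trivial product state the GNS space is an honest graded tensor product and the sign $(-1)^{\eps_L\eps_R}$ is immediate. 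Invertibility then transports the conclusion from trivial states to $\Psi$. If you want to salvage your approach, you would need to either prove cone Haag duality for invertible states (a substantial result in its own right) or reduce to the trivial state as the paper does, at which point your tensor-factorization picture becomes literally true rather than something to be extracted from quasi-equivalence.
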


\begin{proof}
We first note that if the statement of the lemma is true for a state $\Phi$, then it is also true for any state $\Psi = \Phi \alpha$ obtained from $\Phi$ by a graded automorphism $\alpha$ associated with a good conical partition with a sufficiently small range. Indeed, the phase factors in the commutants of implementers of $\fparity$, $\beta_{L,R}$ in the GNS representation of $\Psi$ and implementers of $\fparity$, $(\alpha \beta_{L,R} \alpha^{-1})$ in the GNS representation of $\Phi$ are the same, and $\alpha \beta_{L,R} \alpha^{-1}$ would be automorphisms of some thickening of $B_{L,R}$ whose closures intersect only at the apex (if the range of the good partition is sufficiently small). It is also manifest that stacking with additional systems does not affect the statement of the lemma. Since the statement holds for a state unitarily equivalent to a trivial state, it follows that it holds for any invertible state as well.
\end{proof}

\subsection{String operators} \label{ssec:stringoperators}

Suppose we have an on-site automorphism $\alpha$ of some spin system $\cstar{A}$ that leaves a state $\Psi$ invariant (i.e. $\Psi = \Psi \alpha$). Then, for a closed region $X \subset \RR^2$, the automorphism $\alpha_X$ changes $\Psi$ appreciably only near the boundary of $X$. One may wonder if it is possible to choose automorphism $\beta$ somehow localized near the boundary of $X$ such that $\Psi \alpha$ and $\Psi \beta$ are the same. Physically, if $\alpha$ is a generator of a symmetry, it would mean that the "fluxes" for this symmetry are mobile, i.e. can be moved by applying local Hamiltonian evolution supported near a given path.

We will formalize the property into the following definition.
\begin{definition}
Let $\cstar{A}$ be a spin system (possibly, with $G$-symmetry), $\Psi$ be a graded ($G$-invariant) pure state over it, and let $\alpha$ be an on-site graded ($G$-invariant) automorphism that preserves $\Psi$. We will say that $\alpha$ {\it admits string operators} for $\Psi$, if for any good conical partition $\{A, A^c, B_{L}, B_{R}\}$ there exist a trivial extension $\tilde{\Psi}$ of $\Psi$ over some (extended) spin system $\tilde{\cstar{A}}$ and graded $G$-invariant automorphisms $\beta_{L}$, $\beta_{R}$ of $\tilde{\cstar{A}}_{B_{L}}$, $\tilde{\cstar{A}}_{B_{R}}$, respectively, such that $\tilde{\Psi} \alpha_{A} \beta_{L} \beta_{R}$ is unitarily equivalent to $\tilde{\Psi}$.
\end{definition}

If $\Psi$ is invertible, then the problem of determining whether a given on-site automorphism admits string operators for $\Psi$ can be mapped to the problem of classification of invertible phases for one-dimensional systems that was essentially solved in \cite{ogatabourne} (with the definition of phases that we use in this paper). Physically, it comes from the fact that $\alpha_X$ may "pump" a non-trivial 1d invertible phase to the boundary of $X$ (when the system is in the state $\Psi$). In the remainder of this section, we show that the obstruction for fermionic spin systems with $G$-symmetry for a finite group $G$ takes values in $H^2(G, \RR/\ZZ) \times H^1(G, \ZZ/2) \times \ZZ/2$, and that it is the only obstruction to the existence of string operators. For bosonic spin systems, it reduces to an $H^2(G,\RR/\ZZ)$-valued obstruction.

Before defining this map, we first note that $\alpha$ admits string operators for $\Psi$ if and only if it admits string operators for $\Psi \otimes \Psi'$ for some inverse $\Psi'$ of $\Psi$. Indeed, if it admits string operators for $\Psi \otimes \Psi'$, then it admits string operators for $\Psi \otimes \Psi' \otimes \Psi$ and therefore also for $\Psi \otimes \Psi'_0 \otimes \Psi_0$ for some trivial states $\Psi_0$, $\Psi'_0$. Hence, it is enough to analyze the case of $\Psi$ in a trivial phase. 

Let us fix a good conical partition $\{A, A^c, B_L, B_R\}$. We let $B_U$, $B_D$ be the cones of the complement of $B_L \cup B_R$ so that $B_U$ is contained in $A$, while $B_D$ is in $A^c$ (see Fig. \ref{fig:conesBLBRBUBD}). Choose a good conical partition $\{A_0,..., A_n, B_{01}, ..., B_{n0}\}$ with the same apex as $A$ with a sufficiently small range.\footnote{If $\theta$ is less than the angles of the cones $B_L, B_U, B_R, B_D$, then the range $\theta/4$ would suffice.} We can trivially extend $\Psi$ (and by the abuse of notation we use $\Psi$ and $\cstar{A}$ for the state and the algebra of this extended system) and choose a (graded $G$-invariant) automorphism $\beta$ associated with the partition and $u \in \cstar{A}$, such that $\Psi_0 = \Psi \beta \Ad_u$ is a trivial state. Note that $\Ad_{u^*} \beta^{-1} \alpha \beta \Ad_u$ preserves $\Psi_0$. Let $\tilde{\alpha}_A = \Ad_{u^*} \beta^{-1} \alpha_A \beta \Ad_u$ and $\tilde{\Psi}_0 = \Psi_0 \tilde{\alpha}_A$. The states $\tilde{\Psi}_0|_{B_U}$, $\tilde{\Psi}_0|_{B_D}$ are quasi-equivalent to trivial states $\Psi_0|_{B_U}$, $\Psi_0|_{B_D}$, respectively, while $\tilde{\Psi}_0|_{B_L \cup B_R}$ is quasi-equivalent to $\tilde{\Psi}_0|_{B_L} \otimes \tilde{\Psi}_0|_{B_R}$. Any pure state $\Phi$, such that $\Phi|_C$ is quasi-equivalent to a trivial state for a cone $C$, is quasi-equivalent to $\Phi|_C \otimes \Phi|_{C^c}$. Hence, $\tilde{\Psi}_0$ is quasi-equivalent to $\tilde{\Psi}_0|_{B_L} \otimes \tilde{\Psi}_0|_{B_U} \otimes \tilde{\Psi}_0|_{B_R} \otimes \tilde{\Psi}_0|_{B_D}$. Note that for an automorphism $\beta_{B_{L,R}}$ of $\cstar{A}_{B_{L,R}}$, $(\beta \beta_{L,R} \beta^{-1})$ is an automorphism of $\cstar{A}_{B'_{L,R}}$ for a $\theta$-thickening $B'_{L,R}$ of $B_{L,R}$ for some $\theta$ that depends on the range of the partition (and which can be made arbitrarily small). If we managed to find (graded $G$-invariant) automorphisms $\beta_{B_L}$, $\beta_{B_R}$ of $\cstar{A}_{B_L}$, $\cstar{A}_{B_R}$, respectively, such that $(\tilde{\Psi}_0|_{B_{L,R}}) \beta_{B_{L,R}}$ are quasi-equivalent to trivial states $\Psi_0|_{B_{L,R}}$, then $\tilde{\Psi}_0 \beta_{B_L} \beta_{B_R}$ would be quasi-equivalent to $\Psi_0$, and, therefore, $\Psi \alpha_A (\beta \beta_{B_L} \beta^{-1}) (\beta \beta_{B_R} \beta^{-1})$ would be quasi-equivalent to $\Psi$. Thus, the problem of finding string operators for $\Psi$ reduces to finding automorphisms $\beta_{B_L}$, $\beta_{B_R}$.

With any pure state $\Phi$, such that $\Phi$ is quasi-equivalent to $\Phi|_{C} \otimes \Phi_{C^c}$ for a generic cone $C$, one can associate an $H^2(G, \RR/\ZZ) \times H^1(G, \ZZ/2) \times \ZZ/2$-valued index \cite{ogatabourne}. $\ZZ/2$-index is non-trivial if the states $\Phi$ and $\Phi \fparity_{C}$ are not quasi-equivalent and is trivial otherwise. The element of $H^1(G, \ZZ/2)$ corresponds to a homomorphism $G \to \ZZ/2$ that tells which unitary operators implementing automorphisms $\sym^{(g)}_{C}$ in the GNS representation of $\Phi|_{C}$ anti-commute with the implementer of the fermionic parity $\fparity_{C}$, while the element of $H^2(G, \RR/\ZZ)$ characterizes the projective representation of $G$ realized by these unitaries. The index is invariant under stacking with a trivial state and under application of graded $G$-invariant automorphisms $\alpha_C$, $\alpha_{C^c}$ of $\cstar{A}_{C}$, $\cstar{A}_{C^c}$, respectively. For bosonic systems, it follows from \cite[Theorem 1.11]{ogata2019classification} that if the index vanishes, then there is a $G$-invariant automorphism $\beta$ of $\cstar{A}_C$ such that $(\Phi|_C) \beta$ is quasi-equivalent to a trivial state over $\cstar{A}_C$. For fermionic systems, such a graded $G$-invariant $\beta$ exists, because we can use the Jordan-Wigner transformation to reduce the problem to a bosonic system with $\ZZ/2 \times G$ symmetry.

We define an $H^2(G, \RR/\ZZ) \times H^1(G, \ZZ/2) \times \ZZ/2$-valued index for $\alpha$ to be the index above for the state $\Phi = \tilde{\Psi}_0$ with $C = B_L$. It is straightforward to translate the arguments from \cite{ogatabourne} to show that the index does not depend on the choices made and has the same properties under stacking.

\section{The index}   \label{sec:theindex}

\subsection{Bosonic systems} \label{ssec:bosonicindex}

Let $\Psi$ be an invertible state over a two-dimensional bosonic spin system $\cstar{A}$. Fix $N \in \NN$ and consider a state $\Psi^{\otimes N}$ over the stack $\cstar{A}^{\otimes N}$ of $N$ copies of the spin system. There is a canonical on-site action of the symmetric group $S_N$ on $\cstar{A}^{\otimes N}$ that on elements $v_1 \otimes v_2 \otimes ... \otimes v_N \in \hilb{V}_j^{\otimes N}$ is given by $v_1 \otimes ... \otimes v_N \to v_{s(1)} \otimes ... \otimes v_{s(N)}$, $s \in S_N$. The state $\Psi^{\otimes N}$ is invariant under this $S_N$-symmetry. Let $H_N = \ZZ/N$ be the subgroup of $S_N$ of cyclic permutations with the generator $\l 1, 2, ... , N \r \to \l N, 1, 2, ... , N-1 \r$. We denote the automorphism corresponding to the generator by $\perm$. The state $\Psi^{\otimes N}$, regarded as a state over a spin system with $\ZZ/N$-symmetry, is invertible. Since $H^2(\ZZ/N, \RR/\ZZ) = 0$, the automorphism $\perm$ admits string operators for $\Psi^{\otimes N}$.

\begin{figure}
\centering
\begin{tikzpicture}[scale=.5]
\filldraw[color=orange!10, fill=orange!10, ultra thick] (0,0) -- (2,-4) -- (-2,-4) -- cycle;
\filldraw[color=orange!10, fill=orange!10, ultra thick] (0,0) -- (3.4641-1,2+3.4641/2) -- (3.4641+1,2-3.4641/2) -- cycle;
\filldraw[color=orange!10, fill=orange!10, ultra thick] (0,0) -- (-3.4641+1,2+3.4641/2) -- (-3.4641-1,2-3.4641/2) -- cycle;
\draw[orange, very thick] (0,0) -- (3.4641+1,2-3.4641/2);
\draw[orange, very thick] (0,0) -- (3.4641-1,2+3.4641/2);
\draw[orange, very thick] (0,0) -- (-3.4641-1,2-3.4641/2);
\draw[orange, very thick] (0,0) -- (-3.4641+1,2+3.4641/2);
\draw[orange, very thick] (0,0) -- (-2,-4);
\draw[orange, very thick] (0,0) -- (2,-4);

\draw[gray, very thick] (0,0) -- (3.4641,2);
\draw[gray, very thick] (0,0) -- (-3.4641,2);
\draw[gray, very thick] (0,0) -- (0,-4);

\node  at (0,-1.5*4) {$B_{12}$}; 
\node  at (1.5*1.7321*4/2,1.5*4/2) {$B_{01}$};
\node  at (-1.5*1.7321*4/2,1.5*4/2) {$B_{02}$};
\node  at (0,1.5*3) {$A_0$}; 
\node  at (-1.5*1.7321*3/2,-1.5*3/2) {$A_2$};
\node  at (1.5*1.7321*3/2,-1.5*3/2) {$A_1$};

\filldraw [gray] (0,0) circle (5pt);

\end{tikzpicture}
\caption{The cones $A_0, A_1, A_2$ are separated by gray lines, the orange lines correspond to the boundaries of the cones $B_{01}, B_{12}, B_{20}$.}
\label{fig:cones}
\end{figure}
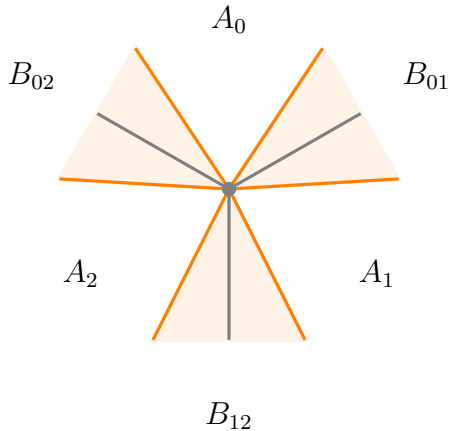

Let us choose a good conical partition $\{A_0$, $A_1$, $A_2, B_{01}, B_{12}, B_{20}\}$ (see Fig. \ref{fig:cones}). We let $\perm_{a} := \perm_{A_a}$. We choose a trivial extension of $\Psi$ (and by the abuse of notation use $\Psi$ and $\cstar{A}$ for the state and the algebra of this extended system) so that there are $\perm$-invariant automorphisms $\stringop_{a(a+1)}$ of $(\cstar{A}_{B_{a(a+1)}})^{\otimes N}$, $a \in \ZZ/3$ such that $w_a = \stringop_{(a-1)a}^{-1} \perm_{a} \stringop_{a(a+1)}$ preserve the unitary equivalence class of $\Psi^{\otimes N}$. Let $\permU$, $W_a$ be unitary operators implementing $\perm$, $w_a$, respectively, in the GNS representation of $\Psi^{\otimes N}$. Since $w_0 w_1 w_2 = \perm$, $W_0 W_1 W_2$ is proportional to $\permU$, and since $w_a$ commute with $\perm$, $\permU W_a \permU^{-1}$ is proportional to $W_a$. We can always modify $\{\stringop_{a(a+1)}\}_{a \in \ZZ/3}$ by an inner automorphism and $\{W_a\}_{a \in \ZZ/3}$ by a phase factor so that the coefficients of proportionality are $1$, and assume that such choice is made.

Since $w_a$ commute, we have 
\beq
W_a W_{a+1} = \theta_N W_{a+1} W_{a}
\eeq
for some $\theta_N \in \CCC$, $|\theta_N| = 1$ that does not depend on $a$ since $W_a$ commute with $\permU = W_0 W_1 W_2$. We claim that $\omega_N := (\theta_N)^N$ defines an invariant of the phase of $\Psi$.

To show that, we have to prove independence of $\omega_N$ of the choice of a good conical partition, automorphisms $\stringop_{a(a+1)}$, and unitary operators $W_{a}$. The operators $W_a$ only have multiplicative phase factor ambiguity, which does not affect $\theta_N$. Any two choices of $\stringop_{a(a+1)}$ differ by $\perm$-invariant automorphisms of $\cstar{A}_{B_{a(a+1)}}$ which preserve the unitary equivalence class of $\Psi^{\otimes N}$, and therefore can be implemented by unitary operators $K_{a(a+1)}$ in the GNS representation. By Lemma \ref{lma:supercommutingimplementers}, these operators commute with each other, and $K_{a(a+1)}$ commutes with $W_{a+2}$. The modified $W_a$ are given by $\tilde{W}_a = K_{(a-1)a}^{-1} W_a K_{a(a+1)}$. We have
\begin{multline}
\tilde{W}_1 \tilde{W}_0 = K_{01}^{-1} W_1 K_{12} K_{20}^{-1} W_0 K_{01} = K_{20}^{-1} K_{01}^{-1} W_1  W_0 K_{01} K_{12} = \\ = \lambda K_{20}^{-1}  W_1  W_0 K_{12} = \lambda \theta_{N}^{-1} \tilde{W}_0 \tilde{W}_1,    
\end{multline}
where $\lambda = \permU K_{01} \permU^{-1} K^{-1}_{01}$ satisfies $\lambda^N = 1$. Thus, $\omega_N = (\theta_N)^N$ is not affected. Finally, the ambiguities in the choice of the conical partition can be absorbed into the ambiguities of $\stringop_{a(a+1)}$ using the fact that any automorphism of $\cstar{A}_{X \cup Y}$ for some finite $X \subset \Lambda$ and any $Y \subset \RR^2$ is a composition of an automorphism of $\cstar{A}_Y$ and an inner automorphism of $\cstar{A}$.

Suppose two states $\Psi_1$ and $\Psi_2$ are related by an automorphism associated with a good conical partition with a sufficiently small range. Let $\beta$ be the corresponding automorphism of $\cstar{A}^{\otimes N}$ so that $\Psi_2^{\otimes N} = \Psi_1^{\otimes N} \beta$. For a given choice of a good conical partition $\{A_0$, $A_1$, $A_2, B_{01}, B_{12}, B_{20}\}$ and $w_{a}$ for the computation of $\theta_N$ for the state $\Psi_1$, we can choose a good conical partition $\{A_0$, $A_1$, $A_2, \tilde{B}_{01}, \tilde{B}_{12}, \tilde{B}_{20}\}$ with $\tilde{B}_{a(a+1)}$ being thickenings of $B_{a(a+1)}$ and $\tilde{w}_a = \beta^{-1} w_a \beta$ for the computation of $\theta_N$ for the state $\Psi_2$. The resulting $\theta_N$ for $\Psi_1$ and $\Psi_2$ would be the same, and therefore $\omega_N$ is an invariant of the phase. It is also manifest, that that for a stack of two systems, $\theta_N$ would be the product of those for each state individually, and therefore the invariant $\omega_N \in U(1)$ is a multiplicative index.

\begin{remark} \label{rmk:localcomputability}
An alternative definition of the index without the usage of the GNS representation is the following. Given that $\Psi w_a$ is unitarily equivalent to $\Psi$, by Kadison transitivity theorem, we can choose unitary elements $u_a \in \cstar{A}$, such that $\tilde{w}_a :=  w_a \Ad_{u_a}$ preserve $\Psi$. Then $\theta_N$ can be defined by $\theta_N = \Psi(w_1(u^*_1) w_0 w_1 (u^*_0 u_1) w_0(u_0))$.
\end{remark}

\begin{remark}
As mentioned in Remark \ref{rmk:chargeambiguity}, there is an ambiguity in the choice of the states for twist defects. It appears in the construction as an ambiguity in the choice of string operators $\stringop_{a(a+1)}$ which may ``pump" a non-trivial charge to the origin.
\end{remark}

\subsubsection{Relation to symmetry-protected index}

One can relate $\omega_N$ to an $S_N$-symmetry-protected index of $\Psi^{\otimes N}$ in the limit $N \to \infty$ that takes values in $H^3(BS_{\infty},\RR/\ZZ) = \ZZ/12 \oplus \ZZ/2 \oplus \ZZ/2$.

Firstly, let us relate $\omega_{N}$ to the symmetry-protected index of the state $\Psi^{\otimes N}$ for $H_N$ subgroup. Note that $w_a^{N}$ is a composition of automorphisms of $\cstar{A}_{B_{(a-1)a}}^{\otimes N}$ and $\cstar{A}_{B_{a(a+1)}}^{\otimes N}$. We let $\ups_{a(a+1)}$ be the automorphisms of $\cstar{A}_{B_{a(a+1)}}^{\otimes N}$ such that $w_a^{N} = \ups^{-1}_{(a-1)a} \ups_{a(a+1)}$. Since $\Psi$ satisfies the split property, each $\ups_{a(a+1)}$ preserves the unitary equivalence class of $\Psi$. Let $U_{a(a+1)}$ be unitary operators implementing $\ups_{a(a+1)}$ which commute with each other by Lemma \ref{lma:supercommutingimplementers}. Since $\ups_{a(a+1)}$ commute with $w_a$, we have $W_{a} U_{a(a+1)} = \omega_N U_{a(a+1)} W_{a}$, $U_{a(a+1)} W_{a+1} = \omega_N W_{a+1} U_{a(a+1)}$ for some $\omega_N \in \CCC$, $|\omega_N| = 1$ that does not depend on $a$ and $W_{a-1} U_{a(a+1)} = U_{a(a+1)} W_{a-1}$ (by Lemma \ref{lma:supercommutingimplementers}). Thus,
\beq
(\theta_N)^N = W_0^{N} W_1 W_0^{-N} W_1^{-1} = U_{20}^{-1} U_{01} W_1 U^{-1}_{01} U_{20} W_1^{-1} = \omega_N.
\eeq

Following the procedure from Appendix \ref{app:SPTindex}, we can express the cocycle for the $\ZZ/N$-symmetry protected index in terms of $\omega_N$:
\beq
\omega(a_1,a_2,a_3) = (\omega_N)^{a_1 \lfloor(a_2 + a_3)/N\rfloor},
\eeq
where $a_1,a_2,a_3 = 0, 1,...,N-1$. The symmetry-protected index of the state $\Psi^{\otimes N}$ with respect to $\ZZ/N$-symmetry is the pullback of the symmetry-protected index for $S_N$-symmetry. The third cohomology group of $S_N$ stabilizes for $N \geq 6$ and is given by $H^3(S_{N\geq 6}, \RR/\ZZ) = \ZZ/12 \oplus \ZZ/2 \oplus \ZZ/2$. By construction, the $S_{N-1}$-symmetry protected index for the state $\Psi^{\otimes(N-1)}$ is the pullback of the $S_{N}$-symmetry protected index for the state $\Psi^{\otimes N}$ with respect to the natural inclusion $S_{N-1} \to S_N$. It follows that all $\omega_N$ are determined by the class in $H^3(BS_{\infty},\RR/\ZZ)$. In particular, we have $(\omega_N)^{12} = 1$.

\begin{remark}
There is an analogous relation in the context of conformal field theory. It was noted in \cite{johnson2019moonshine} that the anomaly of the $S_N$ symmetry acting on the stack $V^{\otimes N}$ of $N$ copies of a holomorphic conformal field theory $V$ is given by $c p_1 \in H^3(BS_N,\RR/\ZZ)$, where $p_1$ is the pullback of the Pontryagin class with respect to the natural inclusion $S_N \to O(N)$ and $c$ is the central charge of the CFT. Since for a holomorphic CFT, the central charge is a multiple of $8$, and since $p_1$ has order 12 for $N > 6$, the anomaly $c p_1$ has order dividing 3.
\end{remark}

\subsection{Fermionic systems} \label{ssec:fermionicindex}

\subsubsection{\texorpdfstring{$\ZZ/16$}{Lg} index from fermionic parity}  \label{sssec:Z16}

Let $\Psi$ be an invertible state of a two-dimensional fermionic spin system. Note that the obstruction for the fermionic parity automorphism $\fparity$ to admit string operators for $\Psi$ takes values in $\ZZ/2$. 

Suppose this $\ZZ/2$ index is trivial so that $\fparity$ admits string operators for $\Psi$. We choose a good conical partition $\{A_0$, $A_1$, $A_2, B_{01}, B_{12}, B_{20}\}$ and let $\fparity_{a} := \fparity_{A_a}$. We choose a trivial extension of $\Psi$ (and by the abuse of notation use $\Psi$ and $\cstar{A}$ for the state and the algebra of this extended system) so that there are graded automorphisms $\stringop_{a(a+1)}$ of $\cstar{A}_{B_{a(a+1)}}$, $a \in \ZZ/3$ such that $w_a = \stringop_{(a-1)a}^{-1} \fparity_{a} \stringop_{a(a+1)}$ preserve the unitary equivalence class of $\Psi$. Let $\fparityU$, $W_a$ be unitary operators implementing $\fparity$, $w_a$ in the GNS representation. Since $w_0 w_1 w_2 = \fparity$, $W_0 W_1 W_2$ is proportional to $\fparityU$, and since $w_a$ commute with $\fparity$, $\fparityU W_a \fparityU^{-1} = (-1)^{\eps_a} W_a$ for $\eps_a = \pm 1$. We can always modify $\{\stringop_{a(a+1)}\}_{a=0,1,2}$ and $\{W_a\}_{a=0,1,2}$ so that the coefficients of proportionality are $1$, and assume that such choice is made.

Since $w_a$ commute, we have 
\beq
W_a W_{a+1} = \theta W_{a+1} W_{a}
\eeq
for some $\theta \in \CCC$, $|\theta_N| = 1$ that does not depend on $a$ since $W_a$ commute with $\fparityU$.

Let us analyze the ambiguities of $\theta$. Different choices of $W_a$ do not affect $\theta$. Any two choices of $\stringop_{a(a+1)}$ differ by graded automorphisms of $\cstar{A}_{B_{a(a+1)}}$ which preserve the unitary equivalence class of $\Psi$, and therefore can be implemented by unitary operators $K_{a(a+1)}$. By Lemma \ref{lma:supercommutingimplementers}, they either commute or anti-commute. Since we impose the condition that $W_a$ commute with $\fparityU$, $K_{a(a+1)}$ must all either commute with each other or anti-commute. The modified $W_a$ are given by $\tilde{W}_a = K_{(a-1)a}^{-1} W_a K_{a(a+1)}$. We have
\begin{multline}
\tilde{W}_1 \tilde{W}_0 = K_{01}^{-1} W_1 K_{12} K_{20}^{-1} W_0 K_{01} = (-1)^{\eps} K_{20}^{-1} K_{01}^{-1} W_1  W_0 K_{01} K_{12} = \\ = K_{20}^{-1}  W_1  W_0 K_{12} = \theta^{-1} \tilde{W}_0 \tilde{W}_1,
\end{multline}
where $(-1)^{\eps} = \fparityU K_{a(a+1)} \fparityU^{-1} K_{a(a+1)}^{-1} = \pm 1$. Thus, $\theta$ is not affected. Finally, as for the bosonic index defined above, ambiguities in the choice of a good conical partition can be absorbed into the ambiguities of $\stringop_{a(a+1)}$.

The same argument as in the bosonic case shows that $\theta$ is the same for states in the same phase, and provides an index for invertible states. 

Let us prove $\theta^8 = 1$. Let $\vortex_{a(a+1)} = \fparity_{a} \stringop_{a(a+1)} \fparity_{a} \stringop_{a(a+1)} = (\stringop_{a(a+1)}^{-1} \fparity_{a+1} \stringop_{a(a+1)}^{-1} \fparity_{a+1})^{-1}$ be the automorphisms of $\cstar{A}_{B_{a(a+1)}}$ which commute with each other and with $w_a$. Note that $w^2_a = v_{(a-1)a}^{-1} v_{a(a+1)}$. The automorphisms $w_a$ preserve the unitary equivalence class of $\Psi$. Since $\Psi$ satisfies the split property, the same is true for $v_{a(a+1)}$. Let us choose unitary operators $U_{a(a+1)}$ implementing $v_{a(a+1)}$ in the GNS representation. Since $v_{a(a+1)}$ are graded, we have $\fparityU U_{a(a+1)} \fparityU = (-1)^{\eps_a} U_{a(a+1)}$, $\eps_a = 0,1$. Since $W_a^2$ is proportional to $U_{(a-1)a}^{-1} U_{a(a+1)}$, all $\eps_a$ are the same $\eps_a =: \eps$. By Lemma \ref{lma:supercommutingimplementers}, the unitaries $U_{a(a+1)}$ either commute with themselves (if $\eps = 0$) or anti-commute (if $\eps = 1$). Since $v_{a(a+1)}$ and $w_{a}$ commute, $W_0 U_{01} = \omega U_{01} W_0 $ for some $\omega \in \CCC$, $|\omega| = 1$. We have $U_{01} = \omega W_0^{-1} U_{01} W_0 = \omega^2 W_0^{-2} U_{01} W^2_0 = \omega^2 (U_{20}^{-1} U_{01})^{-1} U_{01} (U_{20}^{-1} U_{01}) = (-1)^{\eps} \omega^2 U_{01}$. Thus, $\omega^2 = (-1)^{\eps}$. Finally, we note that $W^2_0 W_1 = \ind^2 W_1 W^2_0$. It implies $U_{01} W_1 = \ind^2 W_1 U_{01}$ and $\ind^2 = \omega$. Hence, we have $\theta^{8} = \omega^4 = (-1)^{2 \eps} = 1$.

Thus, we have defined a $\ZZ/2$-index for a general invertible phase and a $\ZZ/8$-index for phases with a trivial $\ZZ/2$-index. We will show in Section \ref{sec:freefermions} that for a phase represented by free fermions with Chern number $\nu$, the $\ZZ/2$-index is non-trivial when $\nu$ is odd, while $\ZZ/8$-index for even $\nu$ is given by $e^{2 \pi i \nu/16}$. It follows that $\ZZ/2$ and $\ZZ/8$ combine into a $\ZZ/16$-index for a general invertible state.

\subsubsection{Index for permutations} \label{sssec:fermionspermutation}

Let $\Psi$ be an invertible state over a two-dimensional fermionic spin system $\cstar{A}$. Fix $N \in \NN$ and consider a state $\Psi^{\otimes N}$ over the stack $\cstar{A}^{\otimes N}$ of $N$ copies of the spin system. There is a canonical on-site action of the symmetric group $S_N$ on $\cstar{A}^{\otimes N}$ that on elements $v_1 \otimes v_2 \otimes ... \otimes v_N \in \hilb{V}_j^{\otimes N}$ is given by $v_1 \otimes ... \otimes v_N \to v_{s(1)} \otimes ... \otimes v_{s(N)}$, $s \in S_N$. The state $\Psi^{\otimes N}$ is invariant under the symmetry group $B_N$ generated by permutations of the copies and fermionic parity automorphism of a single copy. $B_N$ is the hyperoctahedral group, which is the group of symmetries of an $N$-dimensional hypercube.

Let $H_N = \ZZ/N$ be the subgroup of $S_N$ of cyclic permutations with the generator $\l 1, 2, ... , N \r \to \l N, 1, 2, ... , N-1 \r$. We denote the automorphism corresponding to the generator by $\perm$. The state $\Psi^{\otimes N}$, regarded as a state over a spin system with $\ZZ/N$-symmetry, is invertible.  

Suppose $N$ is odd. The obstruction for $\perm$ to admit string operators for $\Psi ^{\otimes N}$ takes values in $H^2(\ZZ/N, \RR/\ZZ) \times H^1(\ZZ/N,\ZZ/2) \times \ZZ/2$, and since $\perm^N = \Id$ and $N$ is odd, it must be trivial. The definition of the phase factor $\theta_N$ for the exchange process of twist defects is almost identical to the definition for bosonic systems. The only difference is that the unitary operators implementing automorphisms in the construction can be odd and anti-commute with each other. We can always make $W_a$ even by modifying $\stringop_{a(a+1)}$. But because $\stringop_{a(a+1)}$ may differ by automorphism that are implemented by odd unitary operators $K_{a(a+1)}$ in the GNS representation, we have a bigger ambiguity in $\theta_N$, so that only $\omega_N = (\theta_N)^{2N}$ is unambiguous.

\begin{remark}
As for bosonic invertible phases, we expect that the index can be expressed in terms of a symmetry-protected index for $B_N$-symmetry in the limit $N \to \infty$ that should give a constraint on the order of $\omega_N$.
\end{remark}

\section{Free fermions}  \label{sec:freefermions}

In this section, we consider the class of quasi-free states of fermionic spin systems that describe ground states of quadratic (non-interacting) Hamiltonians with sufficiently rapid decay of correlations. For such states, one can define the Chern number that provides an invariant with respect to locality-preserving Bogolyubov automorphisms (i.e. automorphisms that preserve the quasi-free property). It is natural to ask if this number can be extended to an invariant of invertible (interacting) phases. We compute the index defined in the previous section for invertible quasi-free states in terms of their Chern number $\nu$ and show that states with $\nu$ that is not a multiple of 48 are in a non-trivial phase.

\subsection{Quasi-free states} \label{ssec:quasifreestates}

We recall the definition of a quasi-free state (see e.g. \cite{araki1971quasifree}). Let $\cstar{A}$ be a self-dual CAR algebra associated with a Hilbert space $\hilb{K}$ with an anti-unitary involution $J$. A state $\Phi$ over $\cstar{A}$ is called {\it quasi-free} if for any $n \in \NN$,
\beq
\Phi(c(f_1)...c(f_{2n-1})) = 0,
\eeq
\beq
\Phi(c(f_1)...c(f_{2n})) = (-1)^{n(n-1)/2} \sum_{s} \sign(s) \prod_{j=1}^{n} \Phi(c(f_{s(j)}) c(f_{s(j+n)})),
\eeq
where the sum is over permutations $s \in S_{2n}$ such that $s(1)< s(2)<...<s(n)$, $s(j)<s(j+n)$, $j = 1,...,n$. There is a bijection between quasi-free states and bounded operators $S \in B(\hilb{K})$ satisfying $0 \leq S = S^* \leq 1$, $S + J S J = 1$. The associated with $S$ quasi-free state $\Phi_S$ is defined by $\Phi_S(c(f)^* c(g)) = \lal f, S g \ral$. We call orthogonal projections $P$ such that $P + JPJ = 1$ {\it basis projections}. When $S = P$ is an orthogonal projection, the state $\Phi_P$ is pure.

With any unitary operator $U \in B(\hilb{K})$ commuting with $J$ there is an associated automorphism $\alpha_{U}$ of $\cstar{A}$ defined on generators by $\alpha_U(c(f)) = c(U f)$. We call automorphisms of this form {\it Bogolyubov automorphisms}. Note that $\Phi_P \alpha_U = \Phi_{U^* PU}$.

We can define a notion of equivalence on quasi-free states in a way similar to general states, but only allowing Bogolyubov automorphisms instead of general automorphisms. The equivalence classes with respect to this relation are called {\it free fermionic phases}. We have a natural homomorphism from the group of invertible free fermionic phases to the group of all invertible phases which is in general neither surjective nor injective.

\subsubsection{Chern number for two-dimensional fermionic systems} 

Consider a two-dimensional fermionic spin system with $\hilb{K} = l^2(\Lambda) \otimes \CCC^{2 n}$. For a subset $\Gamma \subset \Lambda$, we define an orthogonal projection $\Pi_{\Gamma} \in B(\hilb{K})$ to the closed subspace spanned by $\oplus_{j \in \Gamma} \CCC^{2n} \subset \hilb{K}$.

Let $\{A_0,A_1,A_2,B_{01},B_{12}, B_{20}\}$ be a good conical partition. Let $\Psi = \Phi_{P}$ be a quasi-free state for a basis projection $P$ that is also invertible, and let $\tilde{P} = U P U^*$ for some unitary $U \in B(\hilb{K})$. If $U$ acts as identity on $\Pi_{C^c} \hilb{K}$ for a generic cone $C$ with the same apex as $A_a$ and with a small enough angle, $[P \Pi_{A_0} P, P \Pi_{A_1} P]$ is not affected if we replace $P$ with $\tilde{P}$. If $U$ is such that $\Phi_{\tilde{P}}$ is unitarily equivalent to $\Phi_P$, then $[P,U]$ is Hilbert-Schmidt \cite{araki1971quasifree}. It implies that $[\tilde{P} \Pi_{A_0} \tilde{P}, \tilde{P} \Pi_{A_1} \tilde{P}]$ is trace-class if so is $[P \Pi_{A_0} P, P \Pi_{A_1} P]$. Thus, the operator $[P \Pi_{A_0} P, P \Pi_{A_1} P]$ is trace-class, and the Chern number
\beq
\nu(P) = 4\pi i \Tr P [P \Pi_{A_0} P, P \Pi_{A_1} P]
\eeq
for $\Phi_P$ is well-defined. It provides a quantized $\nu(P) \in \ZZ$ topological invariant of an invertible free fermionic phase \cite{bellissard1994, avron1994}.

\subsection{Computation of the index} \label{ssec:computation}

Let us consider a two-dimensional fermionic spin system with $\hilb{K} = l^2(\Lambda) \otimes \CCC^{2n}$. Let $\Psi = \Phi_P$ be a quasi-free invertible state represented by a basis projection $P \in B(\hilb{K})$. For any $N \in \NN$, let $O(N)$ act on $\hilb{K}^{\oplus N}$ by $\Id_{l^2(\Lambda)} \otimes (R^{\oplus 2n})$ for the fundamental representation $R$ of $O(N)$. The state $(\Phi_P)^{\otimes N}$ is invariant under $O(N)$-symmetry realized by Bogolyubov automorphisms corresponding to the $O(N)$ action on $\hilb{K}^{\oplus N}$. Thus, for quasi-free states, the hyperoctahedral symmetry group $B_N$ of general invertible states is enhanced to $O(N)$ symmetry.

For the remainder of the section, we fix a good conical partition $\{A_0$, $A_1$, $A_2, B_{01}, B_{12}, B_{20}\}$ (see Fig. \ref{fig:cones}). Let $q$ be an anti-symmetric self-adjoint matrix on $\CCC^{N}$ that generates $U(1)$-subgroup of $O(N)$, and let $\delta_x$ be a rank one projector in $B(l^2(\Lambda))$ that corresponds to $x \in \Lambda$. For $X \subset \RR^2$, we define operators $Q_X, \tilde{Q}_X \in B(\hilb{K})$ by $Q_X = \sum_{x \in X} \delta_x \otimes (q^{\oplus 2n})$, $\tilde{Q} = P Q_X P + (1-P) Q_X (1-P)$. Note that $[P,\tilde{Q}_X] = 0$. Let
\beq
\hallcond := 2 \pi i \Tr(P [\tilde{Q}_{A_0},\tilde{Q}_{A_1}])
\eeq
which has the physical meaning of the Hall conductance for a system with the ground state $\Psi^{\otimes N}$ and $U(1)$-symmetry realized by Bogolyubov automorphisms associated with $e^{i \alpha Q}$, $\alpha \in [0, 2 \pi)$. It is a topological invariant under deformations preserving $U(1)$ symmetry \cite{LocalNoether}. Define $V_X(\alpha) = e^{i \alpha Q_X}$ and $U_X(\alpha) = e^{i \alpha \tilde{Q}_X}$. Let $\perm_a$, $\tilde{w}_a$ be Bogolyubov automorphisms associated with $V_{A_a}(\alpha)$, $U_{A_a}(\alpha)$. We have $\tilde{w}_a = w_a \Ad_{u_a} = \stringop_{(a-1)a}^{-1} \perm_a \stringop_{a(a+1)} \Ad_{u_a}$ for some unitary elements $u_a \in \cstar{A}$ and automorphisms $\stringop_{a(a+1)}$ of $\cstar{A}_{B_{a(a+1)}}$ which commute with $\perm_a$. Then, using Remark \ref{rmk:localcomputability},
\beq
\theta = \Psi(w_1(u^*_1) w_0 w_1 (u^*_0 u_1) w_0(u_0)) = \frac{1}{2} \Tr(P U_{A_0}(\alpha_0) U_{A_1}(\alpha_1) U^{-1}_{A_0}(\alpha_0) U^{-1}_{A_1}(\alpha_1)) = e^{i \frac{\alpha_0 \alpha_1}{4 \pi} \hallcond}.
\eeq

\begin{remark}
One can interpret the automorphisms associated with $U_{X}(2 \pi /N)$ as automorphisms transporting magnetic $2 \pi/N$-fluxes along the boundary of $X$. $\theta$ computes their statistics that is given by
\beq
\theta_N = e^{\pi i \hallcond/N^2}.
\eeq
\end{remark}

Suppose $N$ is odd. Let us consider the subgroup $\ZZ/{N} \subset B_N$ generated by the permutation $\l 1, 2, ... , N \r \to \l N, 1, 2, ... , N-1 \r$. Let $q \in \End(\CCC^N)$ be
\beq
q = \sum_{j=-(N-1)/2}^{(N-1)/2} j \, e_j \otimes e^*_j,
\eeq
where $e_j$ are the eigenvectors of the $N \times N$ matrix of the permutation $\l 1, 2, ... , N \r \to \l N, 1, 2, ... , N-1 \r$ with the eigenvalue $e^{2 \pi i j/N}$. For this $q$, we have
\beq
\hallcond := \nu(P) (1^2 + 2^2 + ... +(N-1)^2/4) = \frac{\nu(P)}{24}(N^3-N),
\eeq
that gives
\beq
\theta_N = e^{\pi i \hallcond/N^2} = e^{2 \pi i \frac{\nu(P)}{48}(N-1/N)},
\eeq
\beq
(\theta_N)^{2N} = e^{2 \pi i \frac{\nu(P)}{24}(N^2-1)}.
\eeq
It follows that the invertible phases represented by quasi-free states with $\nu(P) \bmod 3 \neq 0$ are non-trivial.

Now, let us consider the action of the fermionic parity on a single copy. For quasi-free states $\Phi_P$, $\ZZ/2$-index coincides with the index defined in Appendix C.4.3 of \cite{kitaev2006anyons} and is given by $(-1)^{\nu(P)}$.

Suppose $\nu(P)$ is even so that $\ZZ/2$-index vanishes. For $X \subset \RR^2$, let $\Pi_{X} = \sum_{x \in X} \delta_x \otimes \Id_{\CCC^{2n}}$, $T = 1 - 2 P$, $\tilde{Q}_{X} = \frac12 (\Pi_X T + T \Pi_X)$ and $U_{X} (\alpha) = e^{i \alpha \tilde{Q}_{X}}$. We can choose $\tilde{w}_a$ to be the automorphisms associated with $U_{A_a}(\pi)$. Then, using
\beq
4 \pi i \Tr (P [\tilde{Q}_{A_0},\tilde{Q}_{A_1}]) = \nu{(P)},
\eeq
we get
\beq
\theta = \frac{1}{2} \Tr(P U_{A_0}(\pi) U_{A_1}(\pi) U^{-1}_{A_0}(\pi) U^{-1}_{A_1}(\pi)) = e^{2 \pi i \nu(P)/16}.
\eeq
Thus, $\ZZ/8$-index of a quasi-free invertible state with $\nu(P) = 2$ has order $8$.

Combining the statements above, we get the non-triviality of the invertible phase of a quasi-free state with $\nu(P) \bmod 48 \neq 0$.

\appendix

\section{Symmetry-protected index for bosonic systems} \label{app:SPTindex}

We will briefly review the definition of symmetry-protected indices for invertible phases of bosonic spin systems. We refer the reader to the papers where it was originally defined \cite{ogata2021h,sopenko2021} for more details. 

Let $G$ be a finite group, and let $\Psi$ be an invertible state of a bosonic spin system with $G$-symmetry. Fix a good conical partition $\{A_0, A_1, A_2, B_{01}, B_{12}, B_{20}\}$. As explained in Section \ref{ssec:stringoperators}, for any $g \in G$, $\sym^{(g)}$, regarded as an automorphism of a system with trivial symmetry, admits string operators. Let $\stringop^{(g)}_{a(a+1)}$ be automorphisms of $\cstar{A}_{B_{a(a+1)}}$, such that $w^{(g)}_0 = \stringop^{(g)-1}_{20} \sym^{(g)}_{0} \stringop^{(g)}_{01}$ preserve the unitary equivalence class of $\Psi$. We choose unitary operators $W_a^{(g)}$ implementing $w_a$ in the GNS representation.

Note that 
\beq
w^{(g_1)}_{0}w^{(g_2)}_{0}w^{(g_1 g_2)-1}_{0} = \ups^{(g_1,g_2)}_{20} \ups^{(g_1,g_2)}_{01}
\eeq 
for some automorphisms $\ups^{(g_1,g_2)}_{a(a+1)}$ of $\cstar{A}_{B_{a(a+1)}}$. Since by Lemma \ref{lma:splitporperty}, the state $\Psi$ has the split property and since $w^{(g)}_a$ preserve the unitary equivalence class of $\Psi$, $\ups^{(g_1,g_2)}_{a(a+1)}$ can be implemented by unitary operators. We choose some implementers $U^{(g_1,g_2)}_{a(a+1)}$. The unitaries $U^{(g_1,g_2)-1}_{(a-1)a} U^{(g_1,g_2)}_{a(a+1)}$ and $W^{(g_1)}_{a}W^{(g_2)}_{a}W^{(g_1 g_2)-1}_{a}$ coincide up to a phase factor. Since
\beq
\ups_{01}^{(g_1,g_2)} \ups_{01}^{(g_1 g_2,g_3)} = w_0^{(g_1)} \ups^{(g_2,g_3)}_{01} w_0^{(g_1)-1} \ups_{01}^{(g_1,g_2 g_3)},
\eeq
the operator $U_{01}^{(g_1,g_2)} U_{01}^{(g_1 g_2,g_3)} U_{01}^{(g_1,g_2 g_3)-1} (W_0^{(g_1)} U_{01}^{(g_2,g_3)} W_0^{(g_1)-1})$ is proportional to the identity operator. We define $\omega^{(g_1, g_2, g_3)} \in \CCC$, $|\omega^{(g_1, g_2, g_3)}|=1$ by
\beq
U^{(g_1,g_2)}_{01} U^{(g_1 g_2,g_3)}_{01} = \omega^{(g_1,g_2,g_3)} (W^{(g_1)}_0 U^{(g_2,g_3)}_{01} W^{(g_1)-1}_0) U^{(g_1,g_2 g_3)}_{01}.
\eeq
It is straightforward to show that it defines a cocycle for the usual cochain complex that computes the group cohomology and that the cohomology class of this cocycle $[\omega] \in H^3(G,\RR/\ZZ)$ does not depend on the choice of $A_a$, $B_{a(a+1)}$, $\stringop^{(g)}_{a(a+1)}$, $W^{(g)}_a$, $U^{(g,h)}_{a(a+1)}$ (see \cite{ogata2021h} for details). It defines an $H^3(G,\RR/\ZZ)$-valued index called symmetry-protected index.

\printbibliography

\end{document}